\newtheorem{theorem}{Theorem}%[section]
\newtheorem{proposition}[theorem]{Proposition}
\newcommand{\revh}[1]{{\color{black}#1}} % revise the text in another color
\newcommand{\del}[1]{\st{#1}} %deleting the text
\newcommand{\com}[1]{\textbf{\color{red} (COMMENT: #1)}} %comment of the text
\newcommand{\response}[1]{\textbf{\color{green} (RESPONSE: #1)}} %response to comment
\newcommand{\revh}[1]{#1}
\newcommand{\del}[1]{}
\newcommand{\com}[1]{}
\newcommand{\comg}[1]{}
\newcommand{\response}[1]{}
\newcommand{\tabincell}[2]{\begin{tabular}{@{}#1@{}}#2\end{tabular}}
\title{Reconfigurable Intelligent Surfaces with Reflection Pattern Modulation: 
		Beamforming Design and Performance Analysis}
\author{Shaoe~Lin, Beixiong~Zheng, George~C.~Alexandropoulos, {\it Senior Member, IEEE},\\
	Miaowen~Wen, {\it Senior Member, IEEE}, Marco~Di~Renzo, {\it Fellow, IEEE}, and~Fangjiong~Chen
%	\thanks{Corresponding author: Miaowen~Wen.}
	\thanks{
%		M. Wen is with the School of Electronic and Information Engineering, South China University of Technology, Guangzhou 510640, China, and also with the State Key Laboratory of Integrated Service Networks, Xidian University, Xi’an 710071, China (e-mail: eemwwen@scut.edu.cn).
		
		S. Lin and F. Chen are with the School of Electronic and Information Engineering, South China University of Technology, Guangzhou 510641, China (e-mail: eeshe.lin@mail.scut.edu.cn, eefjchen@scut.edu.cn).
		
		B. Zheng is with the Department of Electrical and Computer Engineering, National University of Singapore,  Singapore 117583 (e-mail: elezbe@nus.edu.sg).
		
		G. C. Alexandropoulos is with the Department of Informatics and Telecommunications, National and Kapodistrian University of Athens, Panepistimiopolis Ilissia, 15784 Athens, Greece (e-mail: alexandg@di.uoa.gr). 
		
		M. Wen is with the School of Electronic and Information Engineering,
		South China University of Technology, Guangzhou 510640, China, and also
		with the National Mobile Communications Research Laboratory, Southeast
		University, Nanjing 210096, China (e-mail: eemwwen@scut.edu.cn).
		
		M. D. Renzo is with the Université Paris-Saclay, CNRS, CentraleSupélec, Laboratoire des Signaux et 
		Systèmes, Gif-sur-Yvette, France
		(e-mail: marco.direnzo@centralesupelec.fr).
}
}
\begin{document}
%\markboth{Submitted to IEEE Transations on XXX}{SKM: My IEEE article}
\maketitle

\begin{abstract}

Recent considerations for reconfigurable intelligent surfaces (RISs) assume that RISs can convey information by reflection without the need of transmit radio frequency chains, which, however, is a challenging task. 
In this paper, we propose an RIS-enhanced multiple-input single-output system with reflection pattern modulation, where the RIS can configure its reflection state for boosting the received signal power via passive beamforming and simultaneously conveying its own information via reflection. 
We formulate an optimization problem to maximize the {\it average} received signal power by jointly optimizing the active beamforming at the access point (AP) and passive beamforming at the RIS for the case where the RIS’s state information is statistically known by the AP, and propose a high-quality suboptimal solution based on the alternating optimization technique.
	We analyze the asymptotic outage probability of the proposed scheme under Rayleigh fading channels, for which a closed-form expression is derived.
	The achievable rate of the proposed scheme is also investigated for the case where the transmitted symbol is drawn from a finite constellation. 
	Simulation results validate the effectiveness of the proposed scheme and reveal the effect of various system parameters on the achievable rate performance.
	It is shown that the proposed scheme outperforms the conventional RIS-assisted system without information transfer in terms of achievable rate performance.  

\end{abstract}
\begin{IEEEkeywords}
	Reconfigurable intelligent surface (RIS), information transfer, metasurface, outage probability, passive beamforming, rate performance, reflection modulation.
\end{IEEEkeywords}
\IEEEpeerreviewmaketitle

\section{Introduction}
To meet the demanding requirements for fifth generation (5G) wireless communication in, e.g., enhanced data rate, massive connectivity, low latency, ultra reliability, etc., multiple key technologies including millimeter wave (mmWave) communications, massive multiple-input multiple-output (MIMO) systems, and ultra-dense networks (UDNs) have been extensively investigated in the last decade \cite{Boccardi20145G}. 
However, most of those technologies generally require increased implementation complexity and result in considerably increased energy consumption
%, which \rev{along their large hardware footprint} have hindered their extensive practical deployments
 \cite{Sohrabi2016Hybrid,Ngo2013Energy}. 
Leveraging the recent advances in reconfigurable metasurfaces \cite{Liu2018Metasurfaces,Tie2014Coding}, reconfigurable intelligent surfaces (RISs) (a.k.a. intelligent reflecting surfaces) have emerged as a revolutionary technology for improving the coverage and energy/spectrum efficiency of future wireless communications \cite{Renzo2020Smart,Tan2018Enabling,Liaskos2018a,basar2019wireless,Huang2019Holographic,Renzo2019Reflection,Nemanja2019Channel}. 
Specifically, RISs are planar metasurfaces consisting of a large number of low-cost unit cell elements, each of which reflects the incident signals according to its reflection state. 
%with smart controllers attached. 
By configuring the reflection amplitude and/or phase shift at each unit cell element according to the dynamic wireless channels, the signals reflected by the RIS can constructively or destructively combine with the signals from other paths at the receiver for signal power enhancement or co-channel interference suppression \cite{wu2017towards}. 
Compared to traditional relays, RISs are envisioned to work in a full-duplex mode without incurring self-interference and thermal noise, and yet possess substantially reduced hardware cost and energy consumption owing to their nearly passive components \cite{Huang2019RIS,Huang2018EE,Emil2019IRS,Wankai2019Wireless}. 
%In particular, it was shown in \cite{Wu2018IRS} that when the number of RIS elements $L$ is practically large, the received signal power at the receiving node in the RIS-aided system with optimal phase shifts at the RIS increases with $L$ in the order of $L^2$, while in the relay-aided system increases only with $L$. 
%Such power gain is achieved at the expense of more overhead for
%channel estimation in practice, due to the additional channels
%involved between the RIS and its associated access point
%(AP)/users. 

To achieve the theoretical passive beamforming gain offered by RISs in RIS-assisted systems, the global channel state information (CSI) was assumed to be available at the RIS side \cite{Wu2018Intelligent}, which is practically difficult to implement. 
The acquisition/estimation of channels involving RIS is in practice a challenging task due to the absence of signal processing capability of passive RISs and their large number of unit cell elements, which has spurred rapidly growing interests \cite{Tahaa2019enabling,Alexandropoulos2020A,Huang2019indoor,Mishra2019Channel,Z2019Cascaded,Yifei2019Intell,zheng2019Intell,you2019intell}. 
Existing RIS channel acquisition methods can be classified into two categories. 
%depending on whether or not each element of RIS is equipped with a receive radio frequency (RF) chain 
%\cite{wu2017towards}. 
The first category \cite{Tahaa2019enabling,Alexandropoulos2020A} includes the approaches considering that RISs are embedded with some receive radio frequency (RF) chains, 
%but without transmit RF chains, 
which enable them to explicitly estimate the channels from the network's transmitters/receivers to RIS. 
%can be estimated explicitly at the RIS by using conventional channel estimation methods.  
%To estimate the latter channels at the RIS side, sophisticated machine learning algorithms have been designed. 
In the second category, instead of explicitly estimating the individual channels from the network's transmitters/receivers to RIS, the cascaded transmitter-RIS-receiver channels are estimated at the receiver by adopting an element-by-element ON/OFF-based reflection pattern at RIS without the need of costly receive RF chains \cite{Mishra2019Channel}. 
However, this approach incurs prohibitive channel estimation overhead due to the large number of RIS elements. To reduce the channel estimation overhead, a grouping strategy for the RIS elements was proposed in \cite{Yifei2019Intell}, where adjacent RIS elements are grouped together to share a common reflection coefficient. Based on this grouping strategy, a channel estimation method using the discrete Fourier transform (DFT)-based reflection pattern was first presented in \cite{zheng2019Intell} for frequency-selective fading channels, which significantly improves the channel estimation accuracy by leveraging the large aperture gain of RIS. In \cite{Tobias2019An}, a DFT-based reflection pattern was designed for channel estimation in RIS-assisted narrowband systems. More recently, some preliminary works (e.g., \cite{Li2020Parallel,Beixiong2020Intelligent}) have appeared for channel estimation in RIS-assisted multi-user systems.
%On the other hand, free-space path loss \rev{modeling in RIS-assisted wireless communications has} been developed in \cite{Wankai2019Wireless}. 
%Moreover, a new DFT-Hadamard-based training reflection pattern with discrete phase shifts was developed in \cite{you2019intell} to minimize the mean square error (MSE) of channel estimation. 
%An RIS reflection pattern for channel estimation (CE), which utilizes the full reflection of the RIS, is recently proposed in conjunction with a practical transmission protocol for frequency-selective fading channels \cite{zheng2019Intell}. 

Prior works on RISs mainly focus on their utilization for maximizing the received signal strength of links between the access point (AP) and users\revh{, e.g., \cite{Huang2020RIS}}. 
%\rev{via the} joint active and passive beamforming or passive beamforming only.  
In order to improve the spectral efficiency of RIS-assisted systems, the authors of \cite{Basar2019Large} introduced the RIS-space shift keying and RIS-spatial modulation schemes to implement the index modulation at the receiver side. 
In addition, the authors of \cite{Roy2019Beyond} proposed to jointly encode the information in the transmitted signal and RIS state for improving spectral efficiency. 
%Recently, some initial works in\cite{Roy2019Beyond} and \cite{Basar2019Large} considered joint encoding of the transmitted signal and RIS reflection state to improve the system spectral efficiency, by assuming that the transmitter can control the reflection state of the RIS via a finite-rate control link. 
On the other hand, another type of the RIS-assisted system, in which the RIS needs to deliver its own information to the receiver, was considered in \cite{renzo2019smart}. 
In this case, the RIS is equipped with sensors, such as temperature and/or humidity sensors, and
has a requirement on conveying its locally acquired information to an intended receiver. 
Besides the information from sensors, there are other potential sources of RIS information, including the RIS control signaling, RIS maintenance data, estimated CSI at the RIS, etc., \cite{Alexandropoulos2020A,yuan2020ris}. 
%In this case, the RIS is attached with sensors such as temperature and humidity sensors to collect low-rate environmental data, and then informs the intended receiver. 
%Another promising scenario is that the RIS controller has basic computing and storage capabilities, \rev{and needs to transmit its local computations/decisions to a network controller assigned the role to perform environmental intelligence} \cite{renzo2019smart}. 
One option for the information transfer of the RIS is to install multiple transmit RF chains, which, however, incurs much higher hardware cost and energy consumption as compared to the fully-passive RIS case \cite{Huang2019Holographic}. Alternatively, when no transmit RF chain is installed at the RIS, its own information should be implicitly delivered via appropriately designed reflection state, which is a more cost-effective, yet challenging, solution. 
%the sensed data can be implicitly conveyed by controlling the signal reflection of each RIS element, where the RIS implements passive beamforming and information transfer simultaneously. 
%Moreover, transmitted signal from the RIS is undesired interference for  and hence needs to be canceled/suppressed at the receiver
%Since RIS has no transmitting radio frequency (RF) chain, it has no ability to report the sensed data by emitting radio waves, however, it is able to implicitly deliver the sensed data by signal reflection. 
To this end, a passive beamforming and information transfer (PBIT) scheme was proposed in \cite{yan2019pbit} and \cite{yan2019pbit2} to convey the RIS's information besides implementing fully-passive beamforming. 
%a scheme called simultaneous passive beamforming and information transfer is proposed to convey the RIS's information through the ON/OFF state of each reflecting element,  while adjusting the phase shifts of the ON-state elements to assist the communication between the AP and the receiving user, which can obtain the maximum transmission data rate of the RIS. 
However, in this scheme, the number of activated (ON-state) elements\footnote{The ON/OFF switching of the RIS elements can be implemented by the switchers.} varies over time, which can cause significant fluctuation in the reflected signal power, resulting in a relatively high outage probability. 
%leading to an uncontrollable communication enhancement. 

\begin{figure}[!t]
	\centering
	\includegraphics[width=2.8in]{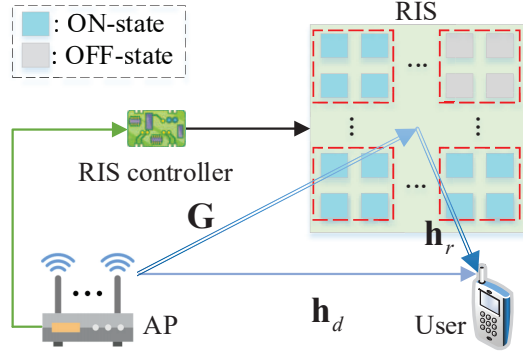}
%	\caption{An illustration of an RIS-enhanced MISO wireless system.}
	\caption{The considered RIS-enhanced downlink MISO wireless communication system. The RIS adopts reflection pattern modulation (RPM) via the combination of its elements' ON/OFF states to simultaneously enhance the communication between the AP and receiving user, and convey its own information data.} 
	\label{system}
\end{figure}

In this paper, we consider an RIS-enhanced multiple-input single-output (MISO) wireless communication system as shown in Fig.~\ref{system}, where a multi-antenna AP communicates with a single-antenna user with the aid of an RIS. 
The RIS is explored as an efficient and inexpensive dual-use technology of passive beamforming and information transfer by proposing the concept of reflection
pattern modulation (RPM). The core RPM idea is to activate a subset of RIS unit cell elements in order 
%with the maximum reflection amplitude and adjustable phase shifts 
to reflect a sharp beam towards the intended destination, while exploiting the indices of the activated (ON-state) elements to implicitly convey the RIS's information. 
%\cite{Wen2019sm,Basar2017im}. 
%We mainly focus on the design of information delivering scheme at the RIS as well as the joint optimal design of the active beamforming at the AP and the passive beamforming at the RIS for maximizing the received signal power. 
The main contributions of this paper are summarized as follows:

\begin{itemize}
\item 
We present an RIS-based RPM (RIS-RPM) scheme for PBIT, in which the RIS plays the following two roles: 1)  it boosts the received signal power at the intended receiver; 2) it transfers via reflection its locally acquired information data. 
We consider the practical scenario where the ON/OFF state information of the RIS is statistically known by the AP, for which an optimization problem is formulated to maximize the {\it average} received signal power by jointly optimizing the active beamforming at the AP and passive beamforming at the RIS. 
The formulated problem, however, is non-convex and thus difficult to solve optimally. 
As such, we propose an efficient algorithm based on the alternating optimization technique to find a high-quality suboptimal solution. 
Moreover, we formulate and solve an optimization problem to maximize the {\it instantaneous} received signal power by designing the active and passive beamforming based on the RIS's instantaneous ON/OFF state information, which serves as a performance upper bound for the practical beamforming design based on the RIS's statistical ON/OFF state information.

\item 
The asymptotic outage probability of the proposed RIS-RPM scheme over Rayleigh fading channels is derived  in closed-form.  
%In particular, 
It is shown that 
the RIS is able to drastically increase the diversity gain by properly designing the phase shifts of its elements. 
%Moreover, the increase of the number of ON-state groups $\bar{K}$ also can improve the outage probability, especially when $\bar{K}$ is small. 
%This improvement, however, becomes smaller as $\bar{K}$ gets larger. 
Moreover, we analyze the achievable rate of the RIS-RPM scheme for the practical case where the transmitted symbol is drawn from a finite constellation input. 
Simulation results validate the effectiveness of the proposed  RIS-RPM scheme as well as the proposed optimization algorithm, and unveil the effect of various system parameters on the achievable rate performance. 
It is shown that the RIS-RPM scheme outperforms the conventional RIS-assisted system with full-ON RIS reflection \cite{Wu2018Intelligent} in terms of achievable rate performance, despite the loss in the received signal strength. 
%This points out that there is a tradeoff between maximizing the signal power and maximizing the rate performance, and 
Moreover, the RIS-RPM scheme is shown to have the potential to strike an attractive trade-off between the received signal power and achievable rate performance by varying the number of activated (ON-state) RIS elements.

%Simulation results are presented to show the effect of different system parameters on the rate performance of the proposed RIS-RPM. 
%We also show  the impact of the channel estimation error on the performance of the two proposed alternating optimization algorithms. 
%The superiority of the proposed channel estimation method over the existing method in \cite{Mishra2019Channel} is validated by both simulation and theoretical analysis. 

\end{itemize}

The rest of this paper is organized as follows. In Section~\ref{sm and ce}, we introduce the system model associated with channel acquisition. In Sections~\ref{beamforming2} and \ref{beamforming1}, we formulate two optimization problems for designing active and passive beamforming with efficient solutions. 
Performance analysis in terms of outage probability and achievable rate is presented in Section~\ref{analysis}. 
Simulation results are presented in Section~\ref{simulation} and conclusions are drawn in Section~\ref{conclusion}.

\emph{Notation}: Upper and lower case boldface letters denote matrices and column vectors, respectively.  
${\left(\cdot\right)}^\dagger$,  ${\left(\cdot\right)}^T$, ${\left(\cdot\right)}^H$, and ${\left(\cdot\right)}^{-1}$ represent conjugation,  transpose, Hermitian transpose, and inversion operations, respectively. 
$\left[\bf X\right]_{\iota,\jmath}$ denotes the ($\iota,\jmath$)-th entry of matrix $\bf X$, and $\left[\bf x\right]_{\iota}$ denotes the $\iota$-th entry of vector $\bf x$. 
${\bf I}_N$ denotes an $N\times N$ identity matrix, \revh{and ${\bf 1}_{N}$ and ${\bf 0}_{N}$ denote $N$-dimensional all-one and all-zero column vectors, respectively.} 
$\lVert{\bf x}\rVert_0$ and $\lVert{\bf x}\rVert$ denote the zero norm and Euclidean norm of vector $\bf x$, respectively. 
%$\|{\bf X}\|_F$ denotes the Frobenius norm of matrix $\bf X$. 
$\mathbb{C}^{n\times m}$ denotes the set of $n\times m$ complex-valued matrices. 
%$\lVert\cdot\rVert_\ell$ denotes the $\ell$-norm of a vector.
$\text{diag}(\bf x)$ denotes a diagonal matrix with each diagonal entry being the corresponding entry in vector $\bf x$. 
$\text{tr}({\bf X})$ and $\text{rank}({\bf X})$ denote the trace and rank of matrix $\bf X$, respectively. 
$|\cdot|$ and \revh{$\angle(\cdot)$} denote the modulus and phase of a complex number, respectively.  
$p\left(\cdot\right)$ denotes the probability density function (PDF).
$\left\lceil\cdot\right\rceil$ represents the ceiling function that returns the least integer greater than or equal to the argument. 
${n\choose{k}}$ denotes the binomial coefficient. 
$\mathbb{A}\setminus \mathbb{B}$ is the complement set of $\mathbb{B}$ with respect to $\mathbb{A}$. 
%the set of elements in set $\mathbb{A}$ but not in set $\mathbb{B}$. 
$\mathcal{CN}( {\bm \mu},{\bf \Sigma})$ denotes the distribution
of a circularly symmetric complex Gaussian random vector with covariance matrix $\bf \Sigma$ and mean $\bm \mu$. 
$\mathbb{E}_X\{\cdot\}$ denotes the expectation over random variable $X$. Let ${\bf X}\succeq 0 $ denote that a Hermitian matrix $\bf X$ is positive semi-definite. 

\section{System Description and Channel Estimation}\label{sm and ce}

As illustrated in Fig.~\ref{system}, we consider a MISO wireless communication system, where an RIS composed of $L$ unit cell elements is deployed to enhance the communication link from an AP equipped with $N$ transmit antennas to a single-antenna user. 
%A design of the passive reflecting element is shown in Fig.~\ref{unit}, in which a positive-intrinsic-negative (PIN) diode is embedded. 
%By changing the values of the biasing voltage and resistor of the PIN diode, as shown in the equivalent circuit, the reflecting element can generate a phase shift in $[0,2\pi]$ and control its reflection amplitude in $[0,1]$  \cite{Yang2017design}. 
An RIS controller is attached
to the RIS, which is responsible for reconfiguring the phase shifts and/or reflection amplitudes of the unit cell elements \cite{Yang2017design} and exchanging information with the AP for the realization of the simultaneous passive beamforming and reflection modulation. 
%The RIS is connected to an RIS controller, in which a number of sensors are embedded for environment monitoring and receiving control signals from the AP. 
We assume that the RIS controller is equipped with a number of sensors in order to monitor/collect environmental data, which is typically low-rate bursty data for the user. 
%According to the data collected by the sensors, including the sensed data and the optimized phase shifts sent by the AP, the RIS controller dynamically adjusts the ON/OFF state and phase shift of each RIS element. 
%at the RIS to simultaneously implement signal reflection and information transfer. 
In this paper, we consider quasi-static block fading channels for the AP-user, AP-RIS, and RIS-user links, where all the channels remain unchanged over the coherence block equal to the duration of $T_c$ symbol sampling periods, and are independent and identically distributed (i.i.d.) across different coherence blocks. This assumption is valid since RISs are practically used to support low-mobility users in their vicinity. 

\subsection{System Model}\label{sm}

{\renewcommand{\arraystretch}{1.4}%
\begin{table}[!t]
	\begin{center}
%		\caption{Look-up Table for $L=4$ and $K=3$}
		\caption{An Example of the Proposed RPM Scheme with an RIS of $L=4$ Unit Cell Elements ($K=3$ ON-state Elements). The Index of the OFF-state Element is Used to Implicitly Convey the RIS's Information Bits.}
		\begin{tabular}{|c|c|c|c|c|}
			\hline
			Information bits of RIS & 00 & 01 & 10 & 11 \\
			\hline
			Index of OFF-state element & \{4\}  & \{3\} & \{2\} & \{1\} \\
			\hline
			\tabincell{c}{Reflection state of RIS} & \begin{minipage}{0.085\linewidth}
				\includegraphics[height=2.5\baselineskip]{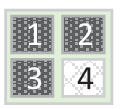}
			\end{minipage} & \begin{minipage}{0.085\linewidth}
			\includegraphics[height=2.5\baselineskip]{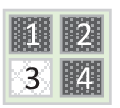}
		\end{minipage} & \begin{minipage}{0.085\linewidth}
		\includegraphics[height=2.5\baselineskip]{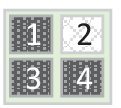}
	\end{minipage} & \begin{minipage}{0.085\linewidth}
	\includegraphics[height=2.5\baselineskip]{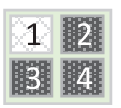}
\end{minipage} \\
			\hline
		\end{tabular}
	\end{center}
\label{lookup}
\end{table}
}

For each symbol duration, only $K$ out of $L$ $(K\le L)$ unit cell elements are turned ON to reflect incident signals on purpose so that the indices of those ON-state elements can implicitly convey the RIS information. 
%while the remaining $(L-K)$ unit cell elements are turned OFF to convey \rev{the RIS} information via their indices. 
%Note that each unit cell element can be turned ON or OFF by controlling its reflection amplitude. Specifically, ON-state elements have a reflection amplitude of 1 and OFF-state elements of 0. 
%In this paper, the mapping of reflection pattern modulation is performed by the look-up table at the RIS controller, e.g., a look-up table for $L=4$ and $K=3$ given in Table.~\ref{lookup}. 
%As seen from Table.~\ref{lookup}, 
With the proposed RPM scheme, 
there are in total ${{L}\choose{K}} $ combinations for determining the indices of $K$ ON-state elements at the RIS, such that  $\log_2 {{L}\choose{K}}$ information bits can be conveyed. 
An example of the proposed RPM scheme with an RIS of $L=4$ unit cell elements ($K=3$ ON-state elements) is illustrated in Table~I. 
However, the number of unit cell elements at the RIS can be practically large, leading to an enormous calculation and storage overhead in the combination mapping as the number of $K$-combinations increases exponentially with $L$. 
%Moreover, it is practically costly to implement the ON/OFF switching on each RIS element with additional amplitude control module, which complicates the hardware design as well. 
%However, for a large number of passive reflecting elements $L$ to harvest as much array gain as possible, the number of combinations can be exponentially large, imposing great calculation and storage overhead in combination mapping.
To address this issue, we adopt the RIS-elements grouping method, 
where a total number of $L$ unit cell elements are divided into $G$ groups, each of which consists of ${\bar L}=L/G$ adjacent elements sharing 
%high channel correlation. Unit cell elements within the same group share 
a common reflection coefficient. 
For notational convenience, we assume that ${\bar L}$ is an integer that we refer to as the grouping size. 
As such, the information of the RIS is conveyed through the indices of the ON-state groups. Specifically, for each symbol duration, ${\bar K}$ out of $G$ $(\bar{K}\le G)$ groups are randomly turned ON for reflecting the incident signals, and the remaining $(G-\bar K)$ groups are deliberately turned OFF for realizing the proposed RPM scheme. 
Therefore, 
%the total number of ON-state elements at the RIS is $K={\bar K\bar L}$, and 
a total number of $\log_2 {G\choose{\bar K}}$ information bits can be conveyed. 
Let $\mathbb{I}\triangleq\left\{{i_1},{i_2},\ldots,{i_{\bar K}}\right\}$ denote the indices of the $\bar{K}$ ON-state groups, 
%\begin{align}
%\mathbb{I}_g=\left\{i_{g,1},\ldots,i_{g,{\bar K}}  \right\}
%\end{align}
where $i_k \in \mathbb{G}\triangleq \left\{1,2,\ldots, { G}\right\}$ for $k=1,2,\ldots,{\bar K}$ with $i_1< i_2<\ldots<i_{\bar K}$. 
%and $\mathbb{I}^g=\left\{l^g_1, l^g_2\ldots,l^g_{\bar L}  \right\}$ with $l^g_j \in \mathbb{L}\triangleq\left\{1,\ldots, { L}\right\}$ for $g=1,\ldots,{G}$ and $j=1,\ldots,{\bar L}$ stands for the indices of the reflecting elements within the $g$-th group. 
Given the indices of the ON-state groups, the common reflection coefficient for the $g$-th group can be characterized by
\begin{align}\label{IRS_group}
\theta_g=\left\{ \begin{gathered}
  \beta_g  e^{-j\phi_g},\quad g \in \mathbb{I}, \hfill \\
0 ,\quad\quad\quad g  \in \mathbb{G}\setminus \mathbb{I}, \hfill
\end{gathered}  \right.\quad g=1,2,\ldots,{G}
\end{align}
where $\beta_g \in [0,1]$ and $\phi_g\in(0, 2\pi]$ represent the common reflection amplitude and phase shift for the $g$-th group, respectively. 
To enhance the reflected signal power and ease the hardware design, the reflection amplitudes of the ON-state groups are set to be the maximum value, i.e., $\beta_g=1$, $\forall g\in\mathbb{I}$. 
Let ${\bm \theta}\triangleq \left[{ \theta}_1, { \theta}_2,\ldots, { \theta}_G   \right]^T$ denote the RIS reflection vector after grouping, which characterizes the equivalent interaction of the RIS with the incident signals. 
According to (\ref{IRS_group}), we have $\|{\bm \theta}\|_0={\bar K}$. 
%\begin{remark}\label{on/off states}
%	From (\ref{IRS_group}), it can be observed that the implementation of ON/OFF state switching on RIS elements is through their amplitude coefficient: RIS elements are in the ON-state with their amplitudes set as $\beta_g=1$, and RIS elements are in the OFF-state with $\beta_g=0,\forall g\in\mathbb{G}\setminus\mathbb{I}$.
%\end{remark}
%Let ${\bm \theta}^g \triangleq \left[\theta^g_1~\theta^g_2 \ldots \theta^g_{\bar L} \right]$ denote the $g$-th RIS group and we have $\|{\bm \theta}^g\|_0={\bar K}$.
%Concatenating these $G$ groups creates the complete phase-shifting vector of RIS, i.e.,
%${\bm \theta}\triangleq \left[{ \theta}_1~ { \theta}_2~\ldots { \theta}_L   \right]^T
%=\left[{\bm \theta}^1~ {\bm \theta}^2~\ldots {\bm \theta}^G   \right]^T$, which represents the interaction of the RIS with the incident signals from the AP.
%To represent the interaction of the LIS with the incident signals from the AP,
%we denote
%\begin{align}
%  {\bm \theta }=\text{diag}\left(\beta e^{j\theta_1},\cdots , \beta e^{j\theta_m}, \cdots , \beta e^{j\theta_M} \right)
%\end{align}
%as the diagonal phase-shifting matrix of the RIS,
%where $\beta\in [0,1]$ and $\theta_m\in[0, 2\pi)$ stands for the amplitude reflection coefficient
%and the phase shift of the $m$-th reflecting element of the RIS.

%Let ${\bf h}_d^H \in\mathbb{C}^{1 \times N}$, ${\bf G} \in\mathbb{C}^{G \times N}$, and ${\bf h}_r^H \in\mathbb{C}^{1 \times G}$ denote the channel gains of the AP$\rightarrow$user link, AP$\rightarrow$IRS link, and RIS$\rightarrow$user link, respectively. 

Let ${\bf G}\triangleq[{\bm \Delta}_{1},{\bm \Delta}_{2},\ldots,{\bm \Delta}_{G}]^H\in\mathbb{C}^{L\times N}$, ${\bf h}_r^H\triangleq[{\bf r}_{1}^H,{\bf r}_{2}^H,\ldots,{\bf r}_{G}^H] \in\mathbb{C}^{1\times L}$, and ${\bf h}_d^H \in\mathbb{C}^{1\times N}$ denote the baseband channels from the AP to RIS, from the RIS to user, and from the AP to user, respectively, where ${\bm \Delta}_{g}^H\in\mathbb{C}^{\bar{L}\times N}$ and ${\bf r}_{g}^H\in\mathbb{C}^{1\times \bar{L}}$ denote the corresponding channels associated with the RIS elements belonging to the $g$-th group, respectively. 
%	the corresponding baseband channels from the AP to RIS elements within the $g$-th group  and from the RIS elements within the $g$-th group to the user, respectively. 
Moreover, due to the severe path loss and high attenuation, {the} signals reflected by the RIS more than once have negligible power and hence can be ignored. 
Accordingly, the received signal at the user 
%over $T$ time slots 
is given by
\begin{align}
{ y}= \sqrt{P_t}\left( \sum_{g=1}^{G}{\bf r}_{g}^H { \theta}_g {\bm \Delta}_{g}^H  +{\bf h}_d^H \right) {\bf w}{ x}  +{n}\label{received1}
\end{align}
where $P_t$ is the maximum transmit power of the AP, ${\bf w} \in \mathbb{C}^{N \times 1}$ stands for the active beamforming at the AP,
${ x}$ is the transmitted symbol,
% over $T$ time slots, whose entries are independently 
which is drawn from an $M\text{-ary}$ constellation $\mathbb{A}$ with normalized power, $\theta_g$ is the common reflection coefficient for the $g$-th group, 
%$h_r^{g,l}\in \mathbb{C}$ is the baseband channel from the RIS to user associated with the $l$-th RIS element within the $g$-th group, ${\bf g}_{g,l}^H\in \mathbb{C}^{1 \times N}$ is the baseband channel from the AP to RIS associated with the $l$-th RIS element within the $g$-th group, 
and ${ n}\sim \mathcal{CN}( { 0},\sigma^2)$ is the additive white Gaussian noise (AWGN) with $\sigma^2$ being the noise power. 
Since RIS elements in the same group share a common reflection coefficient, by denoting $\bar{\bf h}_g^H={\bf r}_{g}^H {\bm \Delta}_{g}^H$ as the equivalent cascaded channel of the AP-RIS-user link associated with the $g$-th group without the effect of RIS reflection, (\ref{received1}) can be rewritten as
\begin{align} 
{ y}=\sqrt{P_t}\left({\bm \theta}^T{\bf H}+ {\bf h}_d^H\right) {\bf w}{ x}+{ n}\label{received2}
\end{align}
where ${\bf H}=\left[\bar{\bf h}_{1},\bar{\bf h}_{2},\ldots,\bar{\bf h}_{G}\right]^H\in\mathbb{C}^{G \times N}$ denotes the cascaded AP-RIS-user channel matrix without the effect of RIS reflection. 
%\begin{remark}\label{information}
From (\ref{received2}), it can be observed that the information delivered to the user consists of two parts:
the first part is from the AP {\it explicitly} expressed as ${ x}$ and the second part is from the RIS  {\it implicitly} embedded in ${\bm \theta}$.
%\end{remark}
%Moreover, designing active and passive beamforming vectors $\bf w$ and $\bm\theta$ to achieve a high received signal power, by making the reflected and non-reflected signal components add constructively at the user, requires the knowledge of $\bf H$ and ${\bf h}_d$. 

\subsection{Channel Estimation}\label{ce}
	
	To boost the received signal power by jointly optimizing the active beamforming at the AP and passive beamforming at the RIS, the knowledge of $\bf H$ and ${\bf h}_d$ is required. 
	By assuming channel reciprocity and using time-division duplex (TDD) protocol, 
	the CSI of $\bf H$ and ${\bf h}_d$ can be obtained at the AP. 
Specifically, during the channel training, $(G+1)$ consecutive pilot symbols are sent by the user. 
	For the duration of the $i$-th ($i=0,1,\ldots,G$) pilot symbol, the received pilot signal vector at the AP can be expressed as
	\begin{align}
		{\bf y}_p^{(i)}=\sqrt{P_p}\left(\sum_{g=1}^{G} {\bm \Delta}_{g} { \psi}_g^{(i)}  {\bf r}_{g}+{\bf h}_d\right) { x_p^{(i)}}+{\bf n}^{(i)}\label{ul1}
		\end{align}
	where $P_p$ is the transmit power of the user, $x_p^{(i)}$ is the $i$-th pilot symbol with normalized power, ${ \psi}_g^{(i)}$ is the common phase shift for the $g$-th group during the transmission of the $i$-th pilot symbol with $|{ \psi}_g^{(i)}|=1$, and ${\bf n}^{(i)}\sim \mathcal{CN}( {\bf 0}_N,\sigma^2{\bf I}_N)$ is the AWGN vector. 
	By letting ${\bm \psi}^{(i)}\triangleq \left[{ \psi}_1^{(i)}, { \psi}_2^{(i)},\ldots, { \psi}_G^{(i)}   \right]^T$ denote the RIS phase-shift state during the transmission of the $i$-th pilot symbol, (\ref{ul1}) can be rewritten as
		\begin{align}
		{\bf y}_p^{(i)}=\sqrt{P_p}\left({\bf H}^H {\bm \psi}^{(i)}  +{\bf h}_d\right) { x_p^{(i)}}+{\bf n}^{(i)}
		= \sqrt{P_p}\tilde{\bf H}\tilde{\bm \psi}^{(i)}{ x_p^{(i)}}+{\bf n}^{(i)}
		\end{align}
		where $\tilde{\bf H}=\left[{\bf h}_d\quad{\bf H}^H\right]\in\mathbb{C}^{N \times (G+1)}$ and $\tilde{\bm \psi}^{(i)}=\left[ {\begin{array}{*{20}{c}}
			{1}\\
			{{\bm \psi}^{(i)}}
			\end{array}} \right]$. 
		By stacking $(G+1)$ consecutive received pilot signals, we have
		\begin{align}
		{\bf Y}_p=\left[{\bf y}_p^{(0)},{\bf y}_p^{(1)},\ldots,{\bf y}_p^{(G)}\right]=\sqrt{P_p}\tilde{{\bf H}} {{\bm \Psi}}\text{diag}\left( {\bf x}_p \right) +{\bf \Omega}\label{CE}
		\end{align}
		where ${\bm \Psi}=\left[\tilde{\bm \psi}^{(0)},\tilde{\bm \psi}^{(1)},\ldots,\tilde{\bm \psi}^{(G)}\right]$ is the RIS phase-shift pattern, ${\bf x}_p=\left[x_p^{(0)},x_p^{(1)},\ldots,x_p^{(G)}\right]^T$ denotes the pilot sequence, and ${\bf \Omega}=\left[{\bf n}^{(0)},{\bf n}^{(1)},\ldots,{\bf n}^{(G)}\right]$ is the AWGN matrix. 
		By using the DFT-based phase-shift pattern \cite{Tobias2019An}, the CSI of $\bf H$ and ${\bf h}_d$ can be estimated as
		\begin{align}\label{estimate}
		\hat{\tilde{{\bf H}}}=\left[\hat{{\bf h}}_d\quad\hat{{\bf H}}^H\right]=\frac{1}{\sqrt{P_p}(G+1)}{\bf Y}_p \text{diag}\left( {\bf x}_p \right)^{-1}{\bf F}_{G+1}^H
		\end{align}
		where $\left[{\bf F}_{G+1}\right]_{\iota,\jmath}=e^{-j\frac{2\pi\iota\jmath}{G+1}}$, $\iota,\jmath=0,1,\ldots,G$. 
			
			After acquiring the CSI of $\bf H$ and ${\bf h}_d$, the AP jointly optimizes the active beamforming $\bf w$ and phase shifts of all RIS-elements groups to boost the received signal power, 
			and then informs the RIS controller the optimized phase shifts to be implemented at the corresponding unit cell elements. 
			In the following section, we formulate an optimization problem to maximize the {\it average} received signal power by jointly designing the active beamforming at the AP and passive beamforming at the RIS based on the statistical ON/OFF state information of the RIS. 

\section{Beamforming Design Based on Statistical ON/OFF State Information}\label{beamforming2}

\subsection{Problem Formulation}

We consider the practical assumption that the ON/OFF state information of the RIS is statistically known by the AP. 
	Our objective is to minimize the  outage probability of the combined AP-user channel
	%	, which turns to maximizing the average received power at the user, 
	by jointly optimizing the active beamforming at the AP and passive beamforming at the RIS under the constraints of the maximum transmit power  of the AP and unit-modulus reflection of the RIS.  
Specifically, given the indices of the ON-state groups ${\mathbb I}$, let ${\bf s}\triangleq \left[{ s}_1, { s}_2,\ldots ,{ s}_G \right]^T$ denote the ON/OFF states of all  RIS-elements groups in a vector form, with each entry given by
\begin{align}\label{state}
{s}_g=\left\{ \begin{gathered}
1,\quad\quad g \in {\mathbb I}, \hfill \\
0,\quad\quad g \in \mathbb{G}\setminus \mathbb{I}, \hfill
\end{gathered}  \right. \quad 
\end{align}
where ${s}_g=1$ and ${s}_g=0$ represent the ON and OFF states, respectively. 
Based on (\ref{state}), the RIS reflection vector ${\bm \theta}$ can be rewritten as ${\bm \theta}={\bm \Phi}{\bf s}$, in which ${\bm \Phi}=\text{diag}\left({\bm \varphi} \right)$ is the diagonal phase-shift matrix of
the RIS with the $g$-th diagonal entry given by $\varphi_g= e^{-j\phi_g}$, $\phi_g\in(0, 2\pi]$, $g=\{1,2,\ldots,G\}$. 
The combined channel from the AP to user is given as ${\bm \theta}^T{\bf H}+ {\bf h}_d^H={\bf s}^T {\bm \Phi} {{\bf H}} + {{\bf h}}_d^H$, and thus the
	achievable rate of the combined AP-user channel conditioned on ${{\bf H}}$ and ${{\bf h}}_d$ can be expressed as
	$
	\log_2 \left( 1+ \frac{P_t}{\sigma^2}
	\left|\left({\bf s}^T {\bm \Phi} {{\bf H}} + {{\bf h}}_d^H\right) {\bf w}\right|^2 \right) 
	$ bits/second/Hertz.
	The corresponding outage probability for a fixed rate $R$ is given by
	\begin{align}\notag
	p_\text{out} (R) &=\mathbb{P} \left\{ \log_2 \left( 1+ \frac{P_t}{\sigma^2}
	\left|\left({\bf s}^T {\bm \Phi} {{\bf H}} + {{\bf h}}_d^H\right) {\bf w}\right|^2 \right) < R \right\}\\\label{out}
	& = \mathbb{P} \left\{ {P_t}
	\left|\left({\bf s}^T {\bm \Phi} {{\bf H}} + {{\bf h}}_d^H\right) {\bf w}\right|^2 < (2^R -1){\sigma^2}\right\}.
	\end{align}
	The optimization problem for minimizing the  outage probability in (\ref{out}) under the constraints of the maximum transmit power of AP and unit-modulus reflection of RIS can be formulated as 
	\begin{align}
	\text{(P0):} \quad \underset{{\bf w},{\bm \Phi}}{\min}
	& \quad  t\\
	 \text{s.t.} & \quad \mathbb{P} \left\{ {P_t}
	\left|\left({\bf s}^T {\bm \Phi} {{\bf H}} + {{\bf h}}_d^H\right) {\bf w}\right|^2 < (2^R -1){\sigma^2}\right\}<t \\
	& \quad \left\| {\bf w}\right\|^2 \le 1 \\
	& \quad |\varphi_{g}|=1, \quad g=1,2,\ldots,G  .
	\end{align}
	%	This indicates that the outage probability is minimized when the AP performs active beamforming which is jointly designed with the phase shifts at the RIS to maximize the user's received power, i.e., ${P_t}
	%	\left|\left({\bf s}^T {\bm \Phi} {{\bf H}} + {{\bf h}}_d^H\right) {\bf w}\right|^2$. 
	However, this problem is non-convex and difficult to solve due to the robust outage probability and unit-modulus constraint. 
	On the other hand, it can be observed from (9) that given any ON/OFF-state vector $\bf s$, we should optimize the active beamforming at the AP and passive beamforming at the RIS to maximize the received signal power for minimizing the outage probability. 
	However, since the instantaneous ON/OFF state information of the RIS is unavailable to the AP, we instead optimize the active beamforming and passive beamforming to maximize the
	{\it average} received signal power. Note that this is effective in improving the performance in terms of outage probability, as will be shown in our simulations. 
Let ${\bf A}=\mathbb{E}_{\bf s}\left\{{\bf s} {\bf s}^T\right\}$ and ${\bf a}=\mathbb{E}_{\bf s}\left\{{\bf s} \right\}$  denote the covariance matrix and mean vector of $\bf s$, respectively. 
Based on (\ref{state}) and  assuming that all possible index combinations are equiprobable, the elements of ${\bf A}$ and ${\bf a}$ can be derived as follows.
\begin{align}
\left[{\bf A}\right]_{i,j}=\left\{ \begin{gathered}
\frac{\bar K}{G},\qquad\qquad~ i=j \hfill \\
\frac{{\bar K}({\bar K}-1)}{G({G}-1)},\quad i \neq j \hfill
\end{gathered}  \right. \quad i,j = 1, 2,\ldots,G
\end{align} 
and 
\begin{align}
\left[{\bf a}\right]_{i}=\frac{\bar K}{G}.
\end{align}
Thus, the average received signal power normalized by the transmit power is given by 
	\begin{align}\notag
	&\mathbb{E}_{\bf s}\left\{ 
	\left|\left({\bf s}^T {\bm \Phi} {{\bf H}} + {{\bf h}}_d^H\right) {\bf w}\right|^2\right\}\\\notag
	=&
	{\bf w}^H \left({{\bf H}^H} {\bm \Phi}^H \mathbb{E}_{\bf s}\{ {\bf s}{\bf s}^T \} {\bm \Phi} {{\bf H}} +{{\bf H}^H} {\bm \Phi}^H \mathbb{E}_{\bf s}\{ {\bf s}\} {{\bf h}}_d^H+ {{\bf h}}_d \mathbb{E}_{\bf s}\{ {\bf s}^T \} {\bm \Phi} {{\bf H}} + {{\bf h}}_d{{\bf h}}_d^H\right) {\bf w}\\
	=&  {\bf w}^H \left[ {\begin{array}{*{20}{c}}
		{\bm \Phi} {{\bf H}}  \\
		{{\bf h}}_d^H 
		\end{array}} \right]^H  \underbrace{ \left[ {\begin{array}{*{20}{c}}
			{\bf A}&{\bf a}\\
			{{\bf a}^T}&{1}
			\end{array}} \right] }_{\tilde{{\bf A}}}  \left[ {\begin{array}{*{20}{c}}
		{\bm \Phi} {{\bf H} } \\
		{{\bf h}}_d^H 
		\end{array}} \right] {\bf w} .\label{ava_received_power4}
	\end{align}
	Accordingly, the corresponding optimization problem based on the estimated CSI of ${\hat{\bf H}}$ and ${\hat{\bf h}}_d$ is formulated as: 
\begin{align}
\text{(P1):} \quad
 \underset{{\bf w},{\bm \Phi}}{\max}
& \quad {\bf w}^H\left[ {\begin{array}{*{20}{c}}
	{{\bm \Phi} \hat{{\bf H}}}\\
	{\hat{{\bf h}}_d^H}
	\end{array}} \right]^H\tilde{\bf A}\left[ {\begin{array}{*{20}{c}}
	{ {\bm \Phi} \hat{{\bf H}}}\\
	{\hat{{\bf h}}_d^H}
	\end{array}} \right]{\bf w} \label{P6obj}\\
 \text{s.t.} & \quad \left\| {\bf w}\right\|^2 \le 1 \label{P6con0}\\
&  \quad |\varphi_{g}|=1, \quad g=1,2,\ldots,G  \label{P6con1}.
\end{align}
It can be readily verified that problem (P1) is a non-convex problem  as well, since the 
objective function of (\ref{P6obj}) is non-concave with respect to both ${\bf w}$ and ${\bm \Phi}$, and the constraint in (\ref{P6con1}) is not convex.
Moreover, due to the mutual coupling between
${\bf w}$ and ${\bm \Phi}$ in the objective function of (\ref{P6obj}), problem (P1) becomes even more difficult to solve. To circumvent the above difficulties, 
we develop an alternating optimization algorithm to find a high-quality suboptimal solution to problem (P1) in the following, which iteratively optimizes one of ${\bf w}$ and ${\bm \Phi}$ with the other being fixed at each time for decoupling the original problem.

\subsection{Joint Beamforming Design}

\begin{algorithm}[t]
	\caption{Alternating Optimization Algorithm for Solving Problem (P1)} \label{alg2}
	\hspace*{\algorithmicindent}\textbf{Input}: $\hat{{\bf H}}$, $\hat{{\bf h}}_d$,  threshold $\epsilon$, and the maximum iteration number $I$
	\begin{algorithmic}[1]	
		\STATE Initialize the diagonal phase-shift matrix ${\bm \Phi}^{(1)}:={\bf I}_G$ and set the iteration number $n:=1$ 
		\REPEAT
		\STATE Substitute ${\bm \Phi}^{(n)}$ into (\ref{R}) to get $\tilde{\bf R}^{(n)}$, then find the eigenvector corresponding to the maximum eigenvalue of $\tilde{\bf R}^{(n)}$ to obtain the active beamforming ${\bf w}^{(n)}$ 
		\STATE For given ${\bf w}^{(n)}$, solve problem (P1.4) via convex optimization solver and Gaussian randomization of (\ref{Gaussian randomization}) to obtain the RIS phase-shift matrix ${\bm \Phi}^{(n+1)}$ 
		\STATE Update $n:=n+1$
		\UNTIL The fractional increase of (\ref{P6obj}) is less than {$\epsilon$} or $n>I$
	\end{algorithmic}
	\hspace*{\algorithmicindent} \textbf{Output}: ${\bf w}^*$ and ${\bm \Phi}^*$ 
\end{algorithm}

For any given RIS phase-shift matrix ${\bm \Phi}$, problem (P1) can be rewritten as
%\begin{align}
%\hspace{-0.5cm}\text{(P7.0):}~
%& \underset{{\bf w}}{\text{max}}
%& & {\bf w}^H {\bf G}^H  {\bm \Phi}^H \text{diag} \left({\bf h}_{r}\right)\left( {\bf A} -{\bf aa}^H\right)  \text{diag} \left({\bf h}_{r}^H\right) {\bm \Phi} {\bf G}  {\bf w}\notag \\
%& & &   +\left|\left(\frac{\bar K}{\bar M}{\bf h}^H_r {\bm \Phi} {\bf G} + {\bf h}_d^H\right) {\bf w}\right|^2\\
%& \text{s.t.} & & \left\| {\bf w}\right\|_2^2 \le p_t
%\end{align}
%\begin{align}
%\hspace{-0.5cm}\text{(P7.1):}~
%& \underset{{\bf w}}{\text{max}}
%& &  \sum_{i=1}^{N_{\lambda}-1}\lambda_{i}\left( {\bf A} \right)\left| {\bf a}^H_i \text{diag} \left({\bf h}_{r}^H\right) {\bm \Phi} {\bf G}  {\bf w}\right|^2\notag \\
%& & &   +\left|\frac{\bar K}{\bar M}\left({\bf h}^H_r {\bm \Phi} {\bf G} + {\bf h}_d^H\right) {\bf w}\right|^2\\
%& \text{s.t.} & & \left\| {\bf w}\right\|_2^2 \le p_t
%\end{align}
%where ${\bf a}_i$ is the eigenvector of ${\bf A}$ associated with the $i$-th eigenvalue (excluding the largest eigenvalue associated with eigenvector ${\bf a}$).
%
%-------------------------------------
\begin{align}
\text{(P1.1):} \quad \underset{{\bf w}}{\max}
& \quad {\bf w}^H \tilde{\bf R}  {\bf w}\label{P7obj}\\
 \text{s.t.} & \quad \left\| {\bf w}\right\|^2 \le 1
\end{align}
where 
\begin{align}\label{R}
\tilde{\bf R}=\left[ {\begin{array}{*{20}{c}}
	{ {\bm \Phi} \hat{{\bf H}}}\\
	{\hat{{\bf h}}_d^H}
	\end{array}} \right]^H\tilde{\bf A}\left[ {\begin{array}{*{20}{c}}
	{ {\bm \Phi} \hat{{\bf H}}}\\
	{\hat{{\bf h}}_d^H}
	\end{array}} \right].
\end{align} 
Since $\tilde{\bf R}$ is Hermitian, for any non-zero ${\bf w}$, we have the following inequality
\begin{align}
{\bf w}^H \tilde{\bf R}  {\bf w} \le  \lambda_{\max}( \tilde{\bf R} )  \left\| {\bf w}\right\|^2
\end{align}
where $\lambda_{\max}( \tilde{\bf R} )$ denote the maximum eigenvalue of $\tilde{\bf R}$. Let ${\bf v}_{\max}$ denote the eigenvector corresponding to the maximum eigenvalue of $\tilde{\bf R}$. Then, it can be readily verified that the optimal solution to problem (P1.1) is given by ${\bf w}^*={\bf v}_{\max}/\left\| {\bf v}_{\max} \right\|$.  
%Problem (P2.1) can be easily solved as an eigenvalue problem with the optimal value $ \lambda_{\max}\left( \tilde{\bf R} \right)$ and the optimization variable ${\bf w}^*={\bf v}_{\max}$, where $\lambda_{\max}\left( \tilde{\bf R} \right)$ is the largest eigenvalue of matrix $\tilde{\bf R}$ and ${\bf v}_{\max}$ is the corresponding normalized eigenvector. 

Next, we optimize ${\bm \varphi}$ based on the given active beamforming ${\bf w}^*$. 
%Since ${\tilde{\bf g}}_r={\bm \Phi} {\bf H}  {\bf w}= \text{diag}\left({\bf H}  {\bf w}\right){\bm \varphi}$, 
Specifically, for given ${\bf w}^*$, by letting ${\bf \Lambda}= \text{diag}\left(\hat{{\bf H}}  {\bf w}^*\right)$ and ${{g}}_d= \hat{{\bf h}}_d^H {\bf w}^*$, 
%the objective function of (\ref{P6obj}) can be expanded to 
%\begin{align}\label{ava_received_power3}
%{\bar \gamma}={\bm \varphi}^H {\bf \Lambda}^H {\bf A} {\bf \Lambda} {\bm \varphi} + {{g}}_d {\bm \varphi}^H {\bf \Lambda}^H {\bf a} +{{g}}_d^H {\bf a}^T {\bf \Lambda} {\bm \varphi} +\left|{{g}}_d \right|^2.
%\end{align}
%Hence, for given ${\bf w}^*$, 
problem (P1) can be rewritten as follows (omitted irrelevant terms  for brevity). 

\begin{align}
\text{(P1.2):} \quad
 \underset{{\bm \varphi}}{\max}
 & \quad {\bm \varphi}^H {\bf \Lambda}^H {\bf A} {\bf \Lambda} {\bm \varphi} + {{g}}_d {\bm \varphi}^H {\bf \Lambda}^H {\bf a} +{{g}}_d^H {\bf a}^T {\bf \Lambda} {\bm \varphi} \label{P2.2obj}\\
 \text{s.t.}  & \quad  |\varphi_{g}|=1, \quad g=1,2,\ldots,G  . \label{P2.2con1}
\end{align}
From (\ref{P2.2obj}) and (\ref{P2.2con1}), we see that problem (P1.2) is a non-convex quadratically constrained quadratic program (QCQP), which can be reformulated as a homogeneous QCQP by introducing an auxiliary variable $t$ \cite{Anthony2007qop}, i.e., 
\begin{align}
\text{(P1.3):} \quad \underset{\tilde{\bm \varphi}}\max
& \quad {\tilde{\bm \varphi}^H }{\bf \Xi} {\tilde{\bm \varphi}}
\label{P2.3obj}\\
 \text{s.t.} & \quad \left|\varphi_{g}\right|=1,  \quad g=1,2,\ldots,G  \label{P2.3con1}
\end{align}
where 
\begin{align}
{\bf \Xi}=\left[ {\begin{array}{*{20}{c}}
	{{\bf \Lambda}^H{\bf A}{\bf \Lambda}}&{{{g}}_d{\bf \Lambda}^H{\bf a}}\\
	{{g}}_d^H {\bf a}^T {\bf \Lambda}&{0}
	\end{array}} \right], \quad {\tilde{\bm \varphi}}=\left[ {\begin{array}{*{20}{c}}
	{{\bm \varphi}}\\
	{t}
	\end{array}} \right].
\end{align}
The objective function of (\ref{P2.3obj}) can be rewritten as ${\tilde{\bm \varphi}^H }{\bf \Xi}{\tilde{\bm \varphi}}=\text{tr}\left({\bf \Xi} {\bf Q}\right)$ with ${\bf Q}={\tilde{\bm \varphi}}{\tilde{\bm \varphi}^H }$. Note that ${\bf Q}$ is a positive semidefinite matrix with $\text{rank}\left({\bf Q}\right)=1$. However, as the rank-one constraint is non-convex, we apply the semidefinite relaxation (SDR) method to relax this constraint and reformulate problem (P1.3) as
\begin{align}
\text{(P1.4):} \quad
 \underset{\bf Q}{\text{max}}
& \quad \text{tr}\left({\bf \Xi} {\bf Q}\right)
\label{P5obj}\\
 \text{s.t.} & \quad \left[{\bf Q}\right]_{g,g} =1,~  g=1,2,\ldots,G+1\\
& \quad {\bf Q}\succeq 0
\end{align}
which is a standard convex semidefinite programming (SDP) problem and can be well solved via existing convex optimization solvers such as CVX \cite{grant2014cvx}. It is worth pointing out that after the relaxation, the optimal solution ${\bf Q}^*$ to problem (P1.4) may not be a rank-one solution. Therefore, we retrieve
$\tilde{\bm \varphi}^*$ from ${\bf Q}^*$ as follows.
\begin{align}\label{Gaussian randomization}
\tilde{\bm \varphi}^*=\left\{ \begin{gathered}
{\bf U} {\bf D}^{1/2}{\bf 1}_{G+1},\quad \text{rank}\left({\bf Q}^*\right)=1 \hfill \\
{\bf U} {\bf D}^{1/2}{\bm \varkappa},\quad\quad\quad\quad \text{rank}\left({\bf Q}^*\right)\neq 1 \hfill
\end{gathered}  \right.
\end{align}
where ${\bf Q}^*={\bf U} {\bf D}{\bf U}^H$ is the eigenvalue decomposition of 
${\bf Q}^*$ and ${\bm \varkappa}\sim \mathcal{CN}( {\bf 0}_{G+1},{\bf I}_{G+1})$ is a random vector.
Finally, the suboptimal solution ${\bm \varphi}^*$ to problem (P1.2) is given by
\begin{align}\label{phase_shift}
{\varphi}_{g}^*=\frac{\left[\tilde{\bm \varphi}^*\right]_g{\Big /}\left[\tilde{\bm \varphi}^*\right]_{G+1}}
{\left|\left[\tilde{\bm \varphi}^*\right]_g{\Big /}\left[\tilde{\bm \varphi}^*\right]_{G+1}\right|},\quad g=1,2,\ldots,G .
\end{align}
%the AP and RIS adjust the active beamforming and the phase shifts according to (\ref{MRT}) and (\ref{phase_shift}) in an alternating manner until the convergence is reached or a maximum number of iterations is executed in practice. 
%Following the similar procedures and relaxations from (P4) to (P5), we can finally obtain the diagonal phase-shifting matrix of RIS with full-activation ${\bm \Phi}^*=\text{diag}\left({\bm \varphi}^* \right)$. 
%$\bf w$ and $\bm \Phi$ are optimized iteratively in an alternating manner, until the convergence is reached or a maximum number of iterations is executed in practice. 
The algorithm proceeds by iteratively solving subproblems (P1.1) and (P1.4) in an alternating manner until the convergence criterion is met, i.e., the fractional increase of (\ref{P6obj}) is less than a small positive number $\epsilon$, or the maximum number of iterations has been carried out in practice. 
Algorithm~\ref{alg2} summarizes the above procedures. 
	The convergence of the proposed algorithm can be guaranteed by the fact that the objective value of problem (P1) is non-decreasing over iterations and  upper-bounded by a finite value due to the limited transmit power. 
	On the other hand, problem (P1.1) involving the eigenvalue decomposition of an $N\times N$ matrix can be solved with a complexity of $\mathcal{O}(N^3)$, and the SDP problem (P1.4) can be solved with a worst-case complexity of $\mathcal{O}((G+1)^{4.5})$ \cite{Luo2010Semidefinite}. 
	Given the number of iterations $I$, the total complexity for solving problem (P1)  is thus
	given by $\mathcal{O}\left( (N^3+(G+1)^{4.5})I \right)$. 
After getting the optimized phase-shift vector ${\bm \varphi}^*$, the RIS performs passive beamforming with ${\bm \varphi}^*_{\mathbb I}$ only (while the RIS elements belonging to the remaining $(G-\bar K)$ groups are set to be OFF, i.e., ${\bm \varphi}^*_{\mathbb{G}\setminus \mathbb{I}}$=${\bf 0}_{G-\bar{K}}$) according to the selection of ${\mathbb I}$ by RIS. 
%to simultaneously transmit information of its own and enhance the received signal power at the user. 
%Note that unlike Algorithm~\ref{alg1} in Section III, the implementation of Algorithm~\ref{alg2} relies on the statistical ON/OFF state information only, which implies that the AP does not need to know the real-time ON/OFF state information of the RIS anymore.

\section{Beamforming Design Based on Instantaneous ON/OFF State Information}\label{beamforming1}
%\subsection{Problem Formulation}
%In Section~\ref{beamforming2}, we developed \rev{a practical optimization strategy for designing the joint beamforming.  }
In this section, we characterize the upper bound on the received signal power, which serves to compare the above beamforming design based on the statistical ON/OFF state information of the RIS. 
%In this section, in order achieve the maximum achievable rate of the proposed RIS-RPM scheme, 
We assume that the AP knows the ON/OFF state information of the RIS exactly in real time. 
%In other words, the RIS can exchange information with the AP via the control link. 
It is worth mentioning that this assumption is similar to that in \cite{Roy2019Beyond}. 
Under this assumption, we formulate an optimization problem to maximize the received signal power by jointly designing the active beamforming at the AP and passive beamforming at the RIS based on instantaneous ON/OFF state information of the RIS. 
%\rev{In this case, minimize the  outage probability of the combined AP-user channel is equivalent to maximizing the received power at the user.   As such, }
%With the estimated CSI $\hat{{\bf H}}$ and $\hat{{\bf h}}_d$, the received signal power at the user can be expressed as
%\begin{align}\label{received_power}
%\gamma=P_t\left|\left({\bm \theta}^T \hat{\bf H} + \hat{\bf h}_d^H\right) {\bf w}\right|^2.
%\end{align}
%Since $({L}-{K})$ unit cell elements are turned OFF deliberately for delivering information of the RIS, it is expected to have some loss in the received signal power, compared to the full reflection of the RIS. 
%due to the deactivation of $({L}-{K})$ RIS elements for the purpose of information transfer, 
%To compensate for the power loss, we aim to maximize the received signal power by leveraging joint optimization of the active beamforming ${\bf w}$ at the AP and the passive beamforming ${\bm \theta}$ at the RIS. 
%In this section, we develop a strategy on the joint optimization of ${\bf w}$ and ${\bm \theta}$ under the assumption that the instantaneous ON/OFF state information of the RIS is available at the AP, which serves as the system performance upper bound. 

Given the indices of the ON-state groups $\mathbb{I}$, the corresponding optimization problem based on the estimated CSI  can be formulated as follows (with $P_t$ omitted for brevity). 
\begin{align}
\text{(P2):} \quad \underset{{\bf w},{\bm \theta}}{\text{max}}
& \quad \left|\left({\bm \theta}^T \hat{\bf H} + \hat{\bf h}_d^H\right) {\bf w}\right|^2\label{P1obj}\\
 \text{s.t.} & \quad \left\| {\bf w}\right\|^2 \le 1,\\
& \quad \left|\theta_g\right|=1, \quad\forall~ g  \in \mathbb{I}\label{P1con1},\\
& \quad \theta_g =0,\quad~\forall~ g  \in \mathbb{G}\setminus \mathbb{I}\label{P1con2}.
\end{align}
%where $\mathbb{G}\setminus \mathbb{I}$ denotes the complement of ${\mathbb I}$. 
By eliminating the constraint of (\ref{P1con2}), problem (P2) is equivalent to 
\begin{align}
\text{(P2.1):} \quad \label{P11obj}
 \underset{{\bf w},{\bm \theta}_{\mathbb I}}{\text{max}}
& \quad \left|\left( {\bm \theta}_{\mathbb I}^T \hat{\bf H}_{\mathbb I} + \hat{\bf h}_d^H\right) {\bf w}\right|^2\\
 \text{s.t.} & \quad \left\| {\bf w}\right\|^2 \le 1,\\
& \quad  \left|\theta_{i_k}\right|=1 ,~ i_k  \in \mathbb{I}, ~ k=1,2,\ldots,{\bar{K}}\label{P2con2}
\end{align}
where ${\bm \theta}_{\mathbb I}$ is the sub-vector consisting of the $\bar K$ entries of ${\bm \theta}$ indexed by ${\mathbb I}$, and $ \hat{\bf H}_{\mathbb I}$ is the sub-matrix consisting of the $\bar K$ rows of $\hat{\bf H}$ indexed by ${\mathbb I}$. 
It can be readily verified that problem (P2.1) is non-convex as well, since the objective function of (\ref{P11obj}) is non-concave with respect to both ${\bf w}$ and ${\bm \theta}_{\mathbb I}$, as well as the constraint of (\ref{P2con2}) is not convex. 
%Moreover, due to the mutual coupling between ${\bf w}$ and ${\bm \theta}$ in the objective function of (\ref{P1obj}), the joint optimization problem of (P2) becomes even more difficult to solve. 
%To circumvent the above difficulties, we develop an alternating optimization algorithm to find an approximate solution to problem (P2) in the following subsection, which iteratively optimizes one of ${\bf w}$ and ${\bm \theta}$ with the other fixed at each time for decoupling the original problem.
Apparently, problem (P2.1) can be solved suboptimally by leveraging the alternating optimization technique similarly to Algorithm~\ref{alg2}. 
%\begin{algorithm}[t]
%	\caption{Algorithm for Solving Problem (P1)} \label{alg1}
%	\hspace*{\algorithmicindent} \textbf{Input}: $\mathbb{I}$, $\epsilon$, and the maximum iteration number
%	\begin{algorithmic}[1]	
%		\STATE Initialize the active beamforming ${\bf w}^{(1)}:= {\bf h}_d{\big /}\left\|{\bf h}_d \right\|$ and set the iteration number $n:=1$ 
%		\REPEAT
%		\STATE For given ${\bf w}^{(n)}$, obtain the optimal passive beamforming based on (\ref{optimized phi}), denoted by ${\bm \theta}_{\mathbb I}^{(n)}$ 
%		\STATE For given ${\bm \theta}_{\mathbb I}^{(n)}$, obtain the optimal active beamforming based on (\ref{MRT}), denoted by ${\bf w}^{(n+1)}$ 
%		\STATE Update $n:=n+1$
%		\UNTIL The fractional increase of (\ref{P11obj}) is less than $\epsilon$ or $n$ is greater than the maximum iteration number
%	\end{algorithmic}
%	\hspace*{\algorithmicindent} \textbf{Output}: ${\bf w}$ and ${\bm \theta}_{\mathbb I}$ 
%\end{algorithm}
%\subsection{Joint Beamforming Design}\label{alg1 beam}
% all of which are indexed by ${\mathbb I}$.
%In particular, ${\mathbb I}\triangleq \left\{{ I}_1~ { I}_2~\ldots { I}_{K}   \right\}$ stands for the concatenated indices of ``ON" state reflecting elements in set form, whose relationship with (\ref{index}) and entries are given by
%\begin{align}\label{indices}
%{ I}_{k+{\bar K}(g-1)}=i^g_k+{\bar K}(g-1)
%\end{align}
%for $k=1,\ldots,{\bar K}$ and $g=1,\ldots,G $.
Specifically, based on the alternating optimization technique, 
%we sub-optimally solve problem (P1.1) by iteratively optimizing one of ${\bf w}$ and ${\bm \theta}$ with the other fixed during each iteration. 
one of ${\bf w}$ and ${\bm \theta}_{\mathbb I}$ is optimized with the other being fixed in each iteration. 
For given active beamforming $\bf w$, problem (P2.1) can be reformulated as
\begin{align}
\text{(P2.2):} \quad \underset{{\bm \theta}_{\mathbb I}}{\text{max}}
& \quad \left|\sum_{k=1}^{\bar K}{ \theta}_{i_k}\hat{\bar{{\bf h}}}_{i_k}^H{\bf w} + \hat{\bf h}_d^H {\bf w}\right|^2\label{P2.1obj}\\
 \text{s.t.} & \quad  \left|\theta_{i_k}\right|=1 ,~ i_k  \in \mathbb{I}, ~ k=1,2,\ldots,{\bar{K}}. 
\end{align}
By exploiting the triangle inequality, we have
\begin{align}
 \left|\sum_{k=1}^{\bar K}{ \theta}_{i_k}\hat{\bar{{\bf h}}}_{i_k}^H{\bf w} + \hat{\bf h}_d^H {\bf w}\right| \le  \sum_{k=1}^{\bar K}\left| {\theta}_{i_k}\hat{\bar{{\bf h}}}_{i_k}^H{\bf w}\right| + \left|\hat{\bf h}_d^H {\bf w}\right|
\end{align}
with equality if and only if $\angle\left({ \theta}_{i_k}\hat{\bar{{\bf h}}}_{i_k}^H{\bf w}\right)=\angle\left( \hat{\bf h}_d^H {\bf w}\right)$ for all $i_k\in\mathbb{I}$. 
%This can be explained by the fact that the received signal power is maximized when the signals passing through the AP-RIS-user link and through the AP-user link are constructively combined at the user. 
Therefore, the optimal phase shift for the $k$-th ($k=1,2,\ldots,{\bar{K}}$) ON-state group is given by
\begin{align}\label{optimized phi}
\phi_{i_k}^*=\angle\left(\hat{\bar{{\bf h}}}_{i_k}^H{\bf w}\right)-\angle\left( \hat{\bf h}_d^H {\bf w}\right), \quad i_k  \in \mathbb{I}. 
\end{align}
For given ${\bm \theta}_{\mathbb I}^*$ in (\ref{optimized phi}), it can be readily obtained that the optimal active beamforming ${\bf w}^*$ is given by
\begin{align}\label{MRT}
{\bf w}^*={\bf w}_{\text{MRT}} \triangleq \frac{\left({\bm \theta}_{\mathbb I}^T \hat{\bf H}_{\mathbb I} + \hat{\bf h}_d^H\right)^H }{\left\|{\bm \theta}_{\mathbb I}^T \hat{\bf H}_{\mathbb I} + \hat{\bf h}_d^H \right\|}
\end{align}
which is the well-known maximum-ratio transmission (MRT). 
$\bm \theta_{\mathbb I}$ and $\bf w$ are iteratively optimized according to (\ref{optimized phi}) and (\ref{MRT}) in an alternating manner until the convergence criterion is met. 
Note that the above solution is guaranteed to converge since the objective value of problem (P2) is non-decreasing over iterations and  the optimal objective value of problem (P2) is finite. 

\section{Performance Analysis}\label{analysis}

In this section, we investigate the proposed RIS-RPM scheme in terms of the outage probability and achievable rate. 

\subsection{Outage Probability}\label{sec of outage}

%First, we characterize the outage probability of the proposed scheme.  
For ease of exposition, we assume $N=1$ with ${\bf G}\equiv {\bf g}$ and ${\bf h}_d^H \equiv {h}_d^\dagger$, such that the active beamforming vector $\bf w$ can be dropped. 
\revh{Moreover, we consider the Rician fading channel model for all the channels involved, where each channel coefficient equals the superposition of a determined line-of-sight (LoS) component and a non-LoS component (characterized by a complex Gaussian random variable). 
Let $\kappa_\text{AR}$, $\kappa_\text{Ru}$, and $\kappa_\text{Au}$ denote the Rician factors of the AP-RIS, RIS-user, and AP-user links, respectively. 
In particular, the RIS is generally installed on the walls/ceilings to establish a LoS link with the AP to boost the signal strength in its vicinity, while the user is usually in a relatively rich scattering environment. Therefore, we assume $\kappa_\text{AR}=\infty$, $\kappa_\text{Ru}=0$, and $\kappa_\text{Au}=0$, so that the AP-RIS channel has only a fixed LoS component while the AP-user and RIS-user channels can be well characterized by Rayleigh fading. 
Let $\sigma_g^2$ denote the power in the LoS component of the AP-RIS channel and assume ${\bf h}_r^H \sim \mathcal{CN}({\bf 0}_L, \sigma_h^2{\bf I}_L)$ as well as ${h}_d^\dagger\sim \mathcal{CN}({0}, \sigma_d^2)$.} 
%Furthermore, assume that ${\bf g}\sim \mathcal{CN}({\bf 0}, \sigma_g^2{\bf I}_L)$, ${\bf h}_r^H \sim \mathcal{CN}({\bf 0}, \sigma_h^2{\bf I}_L)$, and ${h}_d^\dagger\sim \mathcal{CN}({0}, \sigma_d^2)$. 
As such, 
%each coefficient in the AP-RIS-user cascaded channel has a complex Gaussian distribution with zero mean and variance $\sigma_r^2=\bar{L}\sigma_h^2\sigma_g^2$, 
we have ${\bf H}\equiv {\bf h}\sim \mathcal{CN}({\bf 0}_G, \sigma_r^2{\bf I}_G)$ with $\sigma_r^2=\bar{L}\sigma_h^2\sigma_g^2$. 
For ease of notation, we assume $\sigma_d^2=\sigma_r^2=1$. 
Moreover, we resort to a unified definition of signal-to-noise ratio (SNR) as $\gamma = P_t/\sigma^2$ to draw essential insights.
Then the outage probability in (\ref{out}) can be simplified as
\begin{align}
p_\text{out} (R) = \mathbb{P} \left\{ 
\left| {\bf s}^T {\bm \Phi} {{\bf h}} + {{h}}_d^\dagger \right|^2 < \frac{2^R -1}{\gamma}\right\}.
\end{align}
Let $\phi_0 = \angle ({ h}_d^\dagger)$ and ${\bm \chi}\triangleq [ \chi_0,\chi_1,\ldots,\chi_G]^T=\left[ {\begin{array}{*{20}{c}}
	{ { h}_d^\dagger }  \\
	{ {\bm \Phi}{\bf h} } 
	\end{array}} \right]$. 
%To characterize the optimal outage performance
First, we assume perfect CSI available at the AP for setting the phase shifts of RIS elements. 
The optimal phase shifts for $N=1$ are solutions that 
arrange the signals reflected by RIS elements to align in phase with the signal over the direct link at the user, regardless of the ON/OFF-state information of the RIS. 
Thus, we have ${\bm \chi}=e^{j \phi_0 }\left[ |\chi_0|, |\chi_1|,\ldots,|\chi_G| \right]^T$, where $\{ |\chi_g| \}_{g=0}^G$ are i.i.d. Rayleigh random variables with parameter $\sqrt{1/2}$. 
Let $X=\left| {\bf s}^T {\bm \Phi} {{\bf h}} + {{h}}_d^\dagger \right|^2$, which is the square of the sum of $(\bar{K}+1)$ independent Rayleigh random variables and has a Gamma distribution with parameters 
%$k_x$ and $\theta_x$, 
\begin{align}
k_x = \frac{\mathbb{E}\{X\}^2}{ \mathbb{E}\{X^2\} - \mathbb{E}\{X\}^2 }, \quad \theta_x= \frac{ \mathbb{E}\{X^2\} - \mathbb{E}\{X\}^2 }{\mathbb{E}\{X\}}
\end{align}
where 
\begin{align}
\mathbb{E}\{X\}=&(\bar{K}+1) (1+ \frac{{\pi}}{4} \bar{K})\\
\mathbb{E}\{X^2\}=&2 (\bar{K}+1) +( \frac{3 {\pi}}{2} +3 ) (\bar{K}+1)\bar{K} + \frac{3 \pi}{2}(\bar{K}+1)\bar{K}(\bar{K}-1) \notag\\
&+ \frac{\pi^2}{16}(\bar{K}+1)\bar{K}(\bar{K}-1)(\bar{K}-2).
\end{align}	
Its probability density function is
\begin{align}
p_X(x) = x^{k_x-1} \frac{e^{ -x/\theta_x}}{ (\theta_x)^{k_x} \Gamma (k_x)}, \quad x\ge 0 .
\end{align}
Approximating $e^{ -x/\theta_x}$ by $1$ for $x\rightarrow 0 $, we have 
\begin{align}
\mathbb{P} \left\{ X < \delta\right\} \approx \frac{\delta^{k_x} }{ (\theta_x)^{k_x} \Gamma (k_x+1)}
\end{align}
for a small positive value  $\delta \rightarrow 0 $.  
Hence at high SNR the outage probability can be approximated by 
\begin{align}\label{pout}
p_\text{out} (R) \approx p_\text{out}^\text{U} (R) \triangleq\frac{(2^R -1)^{k_x}}{ (\theta_x)^{k_x} \Gamma (k_x+1)}{\gamma^{-k_x}} . 
\end{align}
From (\ref{pout}), we see a diversity gain of $k_x$. 
It can be readily verified that $k_x$ linearly increases  with $\bar{K}$ and satisfies $1 \le k_x\le \bar{K}+1$ with equality if and only if $\bar{K}=0$. 
As shown in Fig.~2, 
$p_\text{out}^\text{U} (R)$ tracks very well the trend of $p_\text{out} (R)$ in the high SNR region, and the increase of $\bar{K}$ leads to an increase in the diversity gain. 

\begin{figure}[!t]
	\centering
	\includegraphics[width=4in]{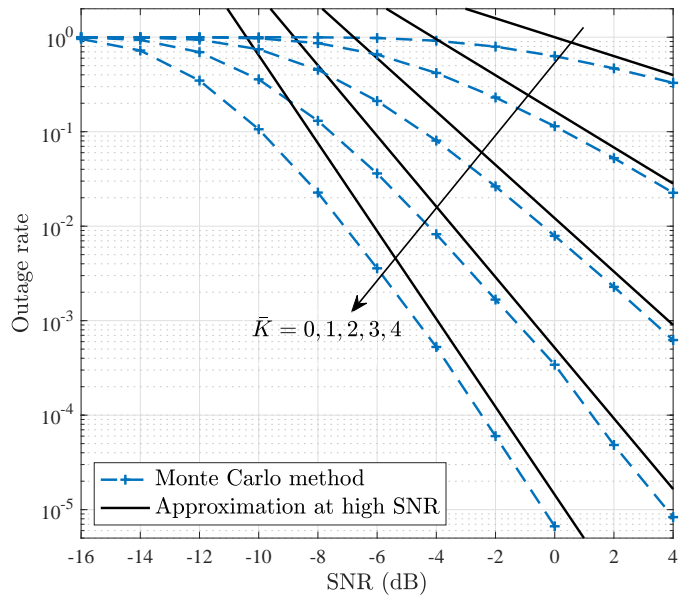}
	\caption{Outage rate of the proposed RIS-RPM scheme in the case of $N=1$ assuming $\sigma_d^2=\sigma_r^2=1$ and perfect CSI available at the AP, where $G=4$, $R=1$ and $\bar{K}$ varies from $0$ to $G$.}
	\label{outage}
\end{figure}

Next, we consider the benchmark case with unit phase shifts at the RIS, i.e., ${\bm \Phi}={\bf I}_G$, which does not require
any CSI for setting the phase shifts and thus can dispense with the channel acquisition. 
In this case, 
%$\{ v_g \}_{g=0}^G$ are i.i.d. complex Gaussian random variables. 
%Thus, 
$X$ is the square of the sum of $\bar{K}+1$ independent
$\mathcal{CN}(0,1)$ random variables and follows Gamma distribution with parameters $k_x=1$ and $\theta_x=\bar{K}+1$. 
Hence at high SNR the outage probability with the unit phase shift design can be approximated by  
\begin{align}\label{pout1}
p_\text{out}^\text{unit} (R) \approx \frac{2^R -1}{ (\bar{K} +1)\gamma} . 
\end{align}
Comparing (\ref{pout}) with (\ref{pout1}), we can see an increase in the diversity gain due to the properly designed RIS phase shifts.

\subsection{Achievable Rate}
%Next,} we investigate the achievable rate of the proposed RIS-RPM system. 
For practical implementation, constellation $\mathbb{A}$ is typically a finite and discrete complex signal set of cardinality $M$ with normalized power, i.e., $\mathbb{A}\triangleq \left\{ a_m \right\}_{m=1}^M$ with $\mathbb{E}\{|a_m|^2\}=1$, where the constellation points are independent and equiprobable. 
Let $\mathbb{S}\triangleq \left\{ {\mathbb{S}_j} \right\}_{j=1}^J$ denote the index set of all the possible combinations of the $\bar{K}$ ON-state groups with the cardinality of $J={G\choose{\bar K}}$, where $\mathbb{S}_j$ is the $j\text{-th}$ element of set $\mathbb{S}$ representing the $j$-th index combination realization. 
Assume that all the combination realizations in $\mathbb{S}$ are independent and equiprobable. 

\subsubsection{Beamforming Design Based on Statistical ON/OFF State Information}

First, we focus on the RIS-RPM scheme with the practical beamforming design based on statistical ON/OFF state information of the RIS. 
We rewrite the system model of (\ref{received2}) as
\begin{align}
y=\sqrt{P_t}\left({\bf s}^T {\bm \Phi} {\bf H} {\bf w} + {\bf h}_d^H {\bf w} \right) x + n .
\end{align}
Let ${\tilde{\bf g}}_r={\bm \Phi} {\bf H}  {\bf w}$, whose entries are the effective channels of the cascaded AP-RIS-user links perceived by the user associated with the corresponding RIS-elements groups. 
%denote the cascaded AP-RIS-user channel after active and passive beamforming with the $g$-th element associated with the $g$-th \rev{RIS-elements} group, and 
Let ${\tilde{g}}_d= {\bf h}_d^H {\bf w}$ denote the effective channel of the direct AP-user link perceived by the user. 
Before recovering the information to be sent by the AP and RIS, the user generally has to acquire the knowledge of ${\tilde{\bf g}}_r$ and ${\tilde{g}}_d$. 
%Apparently, the user can obtain the knowledge of ${\tilde{\bf g}}_r$ and ${\tilde{g}}_d$ based on the downlink pilot training. 
Note that the knowledge of ${\tilde{\bf g}}_r$ can be obtained at the user based on downlink pilot training by using ${\bm \Phi}$ as the RIS reflection pattern, 
while the knowledge of ${\tilde{g}}_d$ can be obtained by applying conventional channel estimation methods with all RIS-elements groups turned OFF. 
%and ${\tilde{\bf g}}_r$ is estimated with the RIS reflection pattern ${\bm \Phi}$.  where the combined channel associated with each group is estimated sequentially by turning on only the IRS elements of the corresponding IRS group while the other elements are set to be off.
%For the case considered in Section~\ref{beamforming2} where the AP and RIS perform active and passive beamforming based on the statistical ON/OFF state information of the RIS only. 
Therefore, we assume that the knowledge of $\tilde{\bf g}_r$ and $\tilde{ g}_d$ is available at the user side, and the achievable rate of the RIS-RPM scheme with the practical beamforming design is given by
\begin{align}\label{dif-mi}
\bar{R}_\text{RIS-RPM}=&\mathbb{E}_{\tilde{\bf g}_r,\tilde{ g}_d}\left\{\text{I}\left(\mathbb{I},x;y|\tilde{\bf g}_r,\tilde{ g}_d\right)\right\} 
\end{align}
where $\text{I}\left(X,Y;Z\right)$ denotes the mutual information between the random vector $(X,Y)$ and random variable $Z$. 

\begin{figure*}[!t]
	\begin{align}\label{mi-h case2}
	\bar{R}_\text{RIS-RPM}\hspace{-0.1cm}=&\log_2{J}+\log_2 M -\log_2e \notag\\
	&-\frac{1}{{J}}\frac{1}{M}\sum_{j=1}^{J}\sum_{m=1}^{M}\mathbb{E}_{\tilde{\bf g}_r,\tilde{ g}_d,v}\left\{\log_2 \sum_{j'=1}^{J}\sum_{m'=1}^{M} e^{-\frac{\left| \frac{v}{\sqrt{P_t}}+\left(\sum\limits_{k\in \mathbb{S}_j} \left[{\tilde{\bf g}}_r\right]_k + {\tilde{g}}_d\right)a_m-\left(\sum\limits_{k\in \mathbb{S}_{j'}} \left[{\tilde{\bf g}}_r\right]_k + {\tilde{g}}_d\right)a_{m'} \right|^2}{\sigma^2/P_t}} \right\}
	\end{align}
	\hrulefill
	% The spacer can be tweaked to stop underfull vboxes.
	%	\vspace*{4pt}
\end{figure*}

\begin{proposition}
	The achievable rate of the RIS-RPM scheme with the practical beamforming design is given by (\ref{mi-h case2}), which is shown at the top of the page, where $v$ is a complex Gaussian random variable following $\mathcal{CN}( { 0},\sigma^2)$. 
\end{proposition}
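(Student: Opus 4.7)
The plan is to evaluate $\text{I}(\mathbb{I},x;y|\tilde{\bf g}_r,\tilde{g}_d)$ by expanding it as a difference of conditional differential entropies in base $2$, and then exploiting the Gaussian-mixture structure that the equiprobable, independent distributions of $\mathbb{I}$ over $\mathbb{S}$ and of $x$ over $\mathbb{A}$ impose on the received signal $y$. Writing
\[
\text{I}(\mathbb{I},x;y|\tilde{\bf g}_r,\tilde{g}_d) = h(y|\tilde{\bf g}_r,\tilde{g}_d) - h(y|\mathbb{I},x,\tilde{\bf g}_r,\tilde{g}_d),
\]
the conditional entropy $h(y|\mathbb{I},x,\tilde{\bf g}_r,\tilde{g}_d)$ is immediate: given $\mathbb{I}=\mathbb{S}_j$ and $x=a_m$, equation (\ref{received2}) shows that $y$ is a deterministic shift of a $\mathcal{CN}(0,\sigma^2)$ random variable, so this entropy equals $\log_2(\pi e \sigma^2)$ regardless of the channel conditioning.

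For the marginal entropy, I would note that the equal-weight mixing yields
\[
f_y(y|\tilde{\bf g}_r,\tilde{g}_d)=\frac{1}{JM}\sum_{j=1}^J\sum_{m=1}^M\frac{1}{\pi\sigma^2}\exp\!\left(-\frac{|y-\sqrt{P_t}\mu_{j,m}|^2}{\sigma^2}\right),
\]
with $\mu_{j,m}\triangleq\left(\sum_{k\in\mathbb{S}_j}[\tilde{\bf g}_r]_k+\tilde{g}_d\right)a_m$. Taking $-\mathbb{E}[\log_2 f_y(y|\tilde{\bf g}_r,\tilde{g}_d)]$ and conditioning on $(\mathbb{I}=\mathbb{S}_j,x=a_m)$ so that $y=\sqrt{P_t}\mu_{j,m}+v$ with $v\sim\mathcal{CN}(0,\sigma^2)$, I would pull $\log_2(JM\pi\sigma^2)$ out of the logarithm and rewrite each exponent $-|v+\sqrt{P_t}(\mu_{j,m}-\mu_{j',m'})|^2/\sigma^2$ in the remaining inner sum, by factoring $\sqrt{P_t}$ inside the modulus, as $-|v/\sqrt{P_t}+\mu_{j,m}-\mu_{j',m'}|^2/(\sigma^2/P_t)$, which is precisely the exponent appearing in (\ref{mi-h case2}).

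Subtracting the two entropies cancels the $\log_2(\pi\sigma^2)$ prefactor against part of $\log_2(\pi e \sigma^2)$, leaving the single residual constant $-\log_2 e$ together with $\log_2 J+\log_2 M$ from $\log_2(JM)$ and the uniformly averaged mixture-log expectation. The outer expectation over $(\tilde{\bf g}_r,\tilde{g}_d)$ then carries through by Fubini, justified by the uniform boundedness of the conditional Gaussian density at fixed variance $\sigma^2$, and (\ref{mi-h case2}) follows.

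The main difficulty is not conceptual but a disciplined base-$2$ bookkeeping. In particular, one must resist splitting off the $j=j',m=m'$ summand of the inner sum (its exponent becomes $-|v|^2/\sigma^2$, not a constant), since removing it would destroy the clean closed form of (\ref{mi-h case2}); and one must verify that the $\log_2(\pi\sigma^2)$ from the Gaussian-mixture prefactor pairs exactly with $\log_2(\pi e \sigma^2)=\log_2(\pi\sigma^2)+\log_2 e$ from the second entropy so that the net surviving constant is $+\log_2 J+\log_2 M-\log_2 e$. With that care the claimed identity drops out without additional structural work.
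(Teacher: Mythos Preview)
Your argument is correct but takes a different decomposition of the mutual information than the paper. The paper expands
\[
\text{I}(\mathbb{I},x;y|\tilde{\bf g}_r,\tilde{g}_d)=\text{H}(\mathbb{I},x)-\text{H}(\mathbb{I},x|y,\tilde{\bf g}_r,\tilde{g}_d),
\]
i.e.\ the discrete input entropy minus the posterior equivocation, and then evaluates $\text{H}(\mathbb{I},x|y,\tilde{\bf g}_r,\tilde{g}_d)$ by writing the posterior via Bayes' rule and making the same change of variable $v=y-\sqrt{P_t}\mu_{j,m}$; the residual integral $\int p(v)\log_2\frac{1}{p(v)}dv$ is identified as the differential entropy $\log_2(\pi e\sigma^2)$ of the noise, and the $-\log_2(\pi\sigma^2)$ term arising from the mixture prefactor cancels against it to leave exactly $\log_2 e$. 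You instead use the output-entropy decomposition $h(y|\tilde{\bf g}_r,\tilde{g}_d)-h(y|\mathbb{I},x,\tilde{\bf g}_r,\tilde{g}_d)$, which is slightly more direct here: the second term is read off immediately as $\log_2(\pi e\sigma^2)$ without any posterior computation, and the first term yields the $\log_2(JM\pi\sigma^2)$ prefactor plus the mixture-log expectation after the same substitution. Both routes land on the identical bookkeeping $\log_2(JM)-\log_2 e$ for the constants and the same double sum inside the expectation, so the final expression agrees with (\ref{mi-h case2}). Your cautionary remarks about not splitting off the diagonal summand and about the exact pairing of the $\log_2(\pi\sigma^2)$ terms are well taken and apply verbatim to the paper's computation as well.
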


\begin{proof}
	According to the definition of mutual information, (\ref{dif-mi}) can be derived as
	\begin{align}\label{dif-mi-h case1}
	\bar{R}_\text{RIS-RPM}=\text{H}\left(\mathbb{I},x\right)-\mathbb{E}_{\tilde{\bf g}_r,\tilde{g}_d}\left\{\text{H}\left(\mathbb{I},x|y,\tilde{\bf g}_r,\tilde{ g}_d\right)\right\}
	\end{align}
	where $\text{H}\left(\cdot\right)$ and $\text{H}\left(\cdot| \cdot \right)$ denote the marginal entropy and conditional entropy, respectively. 
	Due to the independence between the selection of ${\mathbb I}$ and symbol modulation at the AP, we have $\text{H}\left(\mathbb{I},x\right)=\log_2{J}+\log_2 M$. 
	The last term at the right hand side of (\ref{dif-mi-h case1}) can be expressed according to the definition of conditional entropy as
	\begin{align}\label{entropy1 case1}
	\text{H}\hspace{-0.05cm}\left(\mathbb{I},x|y,\tilde{\bf g}_r,\tilde{ g}_d\right)
	\hspace{-0.1cm}=&\frac{1}{J}\frac{1}{M}\sum_{j=1}^{J}\hspace{-0.1cm}\sum_{m=1}^{M}\int_y \hspace{-0.1cm} p\left(y|\mathbb{I}=\mathbb{S}_j,x=a_m,\tilde{\bf g}_r,\tilde{ g}_d\right)\notag\\
	&\times\log_2\frac{ p\left(y\right)}{ \frac{1}{{J}}\frac{1}{M} p\left(y|\mathbb{I}=\mathbb{S}_j,x=a_m,\tilde{\bf g}_r,\tilde{ g}_d\right)}dy
	\end{align}
	where 
	\begin{align}\label{pdf1 case1}
	p\left(y|\mathbb{I}=\mathbb{S}_j,x=a_m,\tilde{\bf g}_r,\tilde{ g}_d\right)=\frac{1}{\pi \sigma^2}e^{-\frac{\left|y-\sqrt{P_t} \left(\sum\limits_{k\in \mathbb{S}_j} \left[{\tilde{\bf g}}_r\right]_k + {\tilde{g}}_d\right)a_m\right|^2}{\sigma^2}}
	\end{align}
	and 
	\begin{align}\label{pdf2 case1}
	p\left(y\right)=\frac{1}{{J}}\frac{1}{M}\frac{1}{\pi \sigma^2}\sum_{j'=1}^{J}\sum_{m'=1}^{M}e^{-\frac{\left|y-\sqrt{P_t}\left(\sum\limits_{k\in \mathbb{S}_{j'}} \left[{\tilde{\bf g}}_r\right]_k + {\tilde{g}}_d\right)a_{m'} \right|^2}{\sigma^2}}.
	\end{align}
	Replacing $y$ with $v\triangleq y-\sqrt{P_t} \left(\sum\limits_{k\in \mathbb{S}_j} \left[{\tilde{\bf g}}_r\right]_k + {\tilde{g}}_d\right)a_m$ yields
	\begin{align}\label{part entropy case1}
	&\text{H}\left(\mathbb{I},x|y,\tilde{\bf g}_r,\tilde{ g}_d\right)=-\log_2 \pi\sigma^2+\frac{1}{{J}}\frac{1}{M}\sum_{j=1}^{J}\sum_{m=1}^{M}\int_{v} p(v)\notag\\
	&\times \hspace{-0.1cm}\log_2 \hspace{-0.1cm} \sum_{j'=1}^{J} \hspace{-0.1cm} \sum_{m'=1}^{M} \hspace{-0.15cm} e^{-\frac{\left| \hspace{-0.05cm} \frac{v}{\sqrt{P_t}}+ \hspace{-0.05cm} \left(\sum\limits_{k\in \mathbb{S}_j} \left[{\tilde{\bf g}}_r\right]_k + {\tilde{g}}_d\right)a_m \hspace{-0.05cm} -\hspace{-0.05cm}\left(\sum\limits_{k\in \mathbb{S}_{j'}} \left[{\tilde{\bf g}}_r\right]_k + {\tilde{g}}_d\right)a_{m'} \hspace{-0.05cm} \right|^2}{\sigma^2/P_t}} \hspace{-0.15cm} dv +\frac{1}{{J}}\frac{1}{M}\sum_{i=1}^{J}\sum_{j=1}^{M}\int_{v} p(v)\log_2 \frac{1}{p(v)}  dv 
	\end{align}
	where $p(v)= \frac{1}{\pi \sigma^2}\exp({-\frac{|v|^2}{\sigma^2}})$ is the PDF of a complex Gaussian random variable with zero mean and variance $\sigma^2$. 
	Since the last term at the right hand side of (\ref{part entropy case1}) is the differential entropy of a $\mathcal{CN}( { 0},\sigma^2)$ random variable, which is equal to $\log_2\pi\sigma^2e$, we finally obtain the expression of  $\bar{R}_\text{RIS-RPM}$ as (\ref{mi-h case2}). 
\end{proof}

\subsubsection{Beamforming Design Based on Instantaneous ON/OFF State Information}

Next, we characterize the upper bound on the achievable rate of the RIS-RPM scheme, where the active and passive beamforming vectors are optimized based on the instantaneous ON/OFF state information of the RIS.
In this case, beamforming ${\bf w}$ and ${\bm \theta}_{\mathbb I}$ highly depend on the selection of ${\mathbb I}$. 
Define 
\begin{align}\label{overall channel}
f({\mathbb{I}},{\bf H},{\bf h}_d) \triangleq\left( {\bm \theta}_{\mathbb I}^T {\bf H}_{\mathbb I} + {\bf h}_d^H\right) {\bf w}
\end{align}
which is the effective channel perceived by the user and can be obtained by downlink channel training. 
%This implies that the user can detect $\mathbb{I}$ by obtaining the channel realization $f({\mathbb{I}},{\bf H},{\bf h}_d)$. 
%Specifically, by denoting $f({\mathbb{I}},{\bf H},{\bf h}_d) \triangleq\left( {\bm \theta}^T {\bf H} + {\bf h}_d^H\right) {\bf w}$ as the equivalent superimposed channel corresponding to $\mathbb{I}$,  the model of (\ref{received2}) can be rewritten as
%\begin{align} 
%{ y}=\sqrt{P_t}f({\mathbb{I}},{\bf H},{\bf h}_d){ x}+{ n}\label{received4}.
%\end{align}
%Therefore, detecting $\mathbb{I}$ turns to detect $f({\mathbb{I}},{\bf H},{\bf h}_d)$. 
%In order to detect $\mathbb{I}$ and $x$ based on (\ref{received4}), the user receiver has to estimate the superimposed channels $f({\mathbb{I}},{\bf H},{\bf h}_d)$ associated with all possible $\mathbb{I}$ values based on the downlink pilot training before data transmission. 
By assuming that the knowledge of $f({\mathbb{I}},{\bf H},{\bf h}_d) $ is available at the user side, the achievable rate of the  RIS-RPM scheme with the beamforming design presented in Section~\ref{beamforming1} is given by 
\begin{align}\label{dif-mi case1}
\bar{R}_\text{RIS-RPM}^\text{UB}=\mathbb{E}_{{\bf H},{\bf h}_d}\left\{\text{I}\left(\mathbb{I},x;y|{\bf H},{\bf h}_d\right)\right\}.
\end{align}

\begin{proposition}
The achievable rate of the  RIS-RPM scheme with the beamforming design presented  in Section~\ref{beamforming1} is given by (\ref{mi-h case1}), which is shown at the top of the next page, where $v$ is a complex Gaussian random variable following $\mathcal{CN}( { 0},\sigma^2)$. 

\end{proposition}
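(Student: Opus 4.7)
The plan is to mirror the proof of Proposition~1 step by step, exploiting the chain-rule decomposition of mutual information and the Gaussian structure of the likelihood. I would start from the definition
\begin{align}
\bar{R}_\text{RIS-RPM}^\text{UB}=\text{H}(\mathbb{I},x)-\mathbb{E}_{{\bf H},{\bf h}_d}\{\text{H}(\mathbb{I},x\mid y,{\bf H},{\bf h}_d)\},
\end{align}
and use the equiprobability and independence of $\mathbb{I}$ and $x$ to obtain $\text{H}(\mathbb{I},x)=\log_2 J+\log_2 M$.

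For the conditional entropy, I would note that once $\mathbb{I}=\mathbb{S}_j$ and $(\mathbf{H},\mathbf{h}_d)$ are fixed, the beamformers $\mathbf{w}$ and $\bm\theta_{\mathbb{I}}$ are deterministic functions of these quantities (they are the outputs of the alternating algorithm in Section~\ref{beamforming1} applied with index set $\mathbb{S}_j$), so the effective channel reduces to the scalar $f(\mathbb{S}_j,\mathbf{H},\mathbf{h}_d)$ defined in (\ref{overall channel}). Consequently,
\begin{align}
p(y\mid\mathbb{I}=\mathbb{S}_j,x=a_m,\mathbf{H},\mathbf{h}_d)=\frac{1}{\pi\sigma^2}\exp\!\left(-\frac{|y-\sqrt{P_t}\,f(\mathbb{S}_j,\mathbf{H},\mathbf{h}_d)\,a_m|^2}{\sigma^2}\right),
\end{align}
and the marginal $p(y\mid\mathbf{H},\mathbf{h}_d)$ is the $\frac{1}{JM}$-weighted mixture of these Gaussians over all $(j',m')$.

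Plugging these into the definition of conditional entropy, I would perform the change of variable $v\triangleq y-\sqrt{P_t}\,f(\mathbb{S}_j,\mathbf{H},\mathbf{h}_d)\,a_m$, which, exactly as in the proof of Proposition~1, splits the integrand into (i) a $-\log_2\pi\sigma^2$ term, (ii) a $\log_2$-of-mixture term whose exponents involve the differences $f(\mathbb{S}_j,\mathbf{H},\mathbf{h}_d)a_m-f(\mathbb{S}_{j'},\mathbf{H},\mathbf{h}_d)a_{m'}$, and (iii) the differential entropy $\log_2\pi\sigma^2 e$ of a $\mathcal{CN}(0,\sigma^2)$ variable. Combining (i) and (iii) yields the $-\log_2 e$ offset, and (ii) contributes the double expectation in the final formula (\ref{mi-h case1}).

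The only genuine subtlety relative to Proposition~1 is book-keeping: because both the active beamformer $\mathbf{w}$ and the passive reflection $\bm\theta_{\mathbb{I}}$ are now functions of $\mathbb{I}$, one must be careful that the index $j'$ in the inner mixture carries its own beamforming pair $(\mathbf{w}(\mathbb{S}_{j'}),\bm\theta_{\mathbb{S}_{j'}})$ embedded inside $f(\mathbb{S}_{j'},\mathbf{H},\mathbf{h}_d)$, rather than reusing the $j$-indexed beamformers. Once this dependence is absorbed into the scalar $f(\cdot,\cdot,\cdot)$, the remaining manipulations are routine and parallel those of Proposition~1, so I do not expect any real obstacle beyond careful notation.
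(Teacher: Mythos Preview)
Your proposal is correct and follows exactly the approach the paper indicates: the authors themselves simply state that the proof is similar to that of Proposition~1 and omit the details. Your careful remark about the beamformers being functions of $\mathbb{I}$, so that the inner mixture must use $f(\mathbb{S}_{j'},\mathbf{H},\mathbf{h}_d)$ rather than reusing the $j$-indexed effective channel, is precisely the only bookkeeping point that distinguishes this case from Proposition~1.
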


\begin{figure*}[!t]
	\begin{align}\label{mi-h case1}
	\bar{R}_\text{RIS-RPM}^\text{UB}=&\log_2 J+\log_2 M -\log_2e \notag\\ &-\frac{1}{{J}}\frac{1}{M}\sum_{j=1}^{J}\sum_{m=1}^{M}\mathbb{E}_{{\bf H},{\bf h}_d,v}\left\{\log_2 \sum_{j'=1}^{J}\sum_{m'=1}^{M} e^{-\frac{\left| v+\sqrt{P_t} f(\mathbb{S}_j,{\bf H},{\bf h}_d) a_m-\sqrt{P_t} f(\mathbb{S}_{j'},{\bf H},{\bf h}_d) a_{m'} \right|^2}{\sigma^2}} \right\}
	\end{align}
	\hrulefill
\end{figure*}

\begin{proof}
	The proof is similar to (\ref{mi-h case2}) and omitted for brevity. 
\end{proof}

\section{Simulation Results and Discussions}\label{simulation}

%\begin{figure}[!t]
%	\centering
%	\includegraphics[width=1.6in]{setup.eps}
%	\caption{The simulation setup where $d_u$ is a variable.}
%	\label{setup}
%\end{figure}

In this section, simulation results are presented to evaluate the performance of our proposed schemes. 
We consider a three dimensional coordinate system, 
where the centers of the AP and RIS are located at $(0,0,0)$ and $(0,d_0,0)$, respectively, and the user is located at $(0,d_y,d_z)$. For the AP, we consider a uniform linear array of $N=4$ antennas with an antenna spacing of half-wavelength, located in $x$-axis. For the RIS, we consider a uniform square array of $L=12\times 12=144$ elements with an element spacing of half-wavelength, deployed in $x$-$z$ plane. 
The fading channel model for all the channels involved is given by
%\cite{tse2005fundamentals}
\begin{align}
h=\sqrt{\frac{\kappa\varpi}{\kappa+1}} e^{-j2\pi{d}/{\vartheta}}+\sqrt{\frac{1}{\kappa+1}}\mathcal{CN}( { 0},\varpi)\notag
\end{align}
where $\kappa$ is the Rician factor, $\varpi$ is the path loss, $d$ is the signal propagation distance, and $\vartheta$ is the signal wavelength. We resort to the simplified path loss model, i.e., $\varpi=C_0 d^{-\alpha}$, where $C_0$ is the path loss at a reference distance of 1 meter (m), and $\alpha$ is the path loss exponent. 
\revh{The Rician factors of the AP-user, AP-RIS, and RIS-user links are set as $\kappa_\text{Au}=0$, $\kappa_\text{AR}=\infty$, and $\kappa_\text{Ru}=0$, respectively, as in Section~\ref{sec of outage}. 
The path loss exponents of the AP-user, AP-RIS, and RIS-user links are set as 3.8, 2.2, and 2.4, respectively. The noise power is set as $\sigma^2=-80$ dBm.} 
%The path loss exponents of the AP$\rightarrow$user, AP$\rightarrow$IRS, and RIS$\rightarrow$user links are respectively set as 3.5, 2, and 2.8. 
Other parameters are set as follows: $d_0=50$ m, $d_z=2$ m, $\vartheta=0.1$ m, \revh{$T_c=150$ symbol sampling periods,} $C_0=30$ dB, and quadrature phase-shift keying (QPSK) for symbol modulation at the AP. We consider the Zadoff-Chu sequence as the pilot sequence and set $P_p = 10$ dBm for channel estimation. 
The maximum iteration number in Algorithm~\ref{alg2} is set as $I=5$ \revh{and the threshold is set as $\epsilon=10^{-4}$. 
The values of $G$ and ${\bar K}$ as well as the AP's transmit power level will be specified later to study their effects on the system performance.} 
In the following simulations, the results are obtained by averaging over more than $1000$ independent channel realizations.

\subsection{Performance of Algorithm~1}

\begin{figure}[!t] 
	\centering    
	\subfigure[Perfect CSI at the AP] {
		\label{perfect CSI}     
		\includegraphics[width=4in]{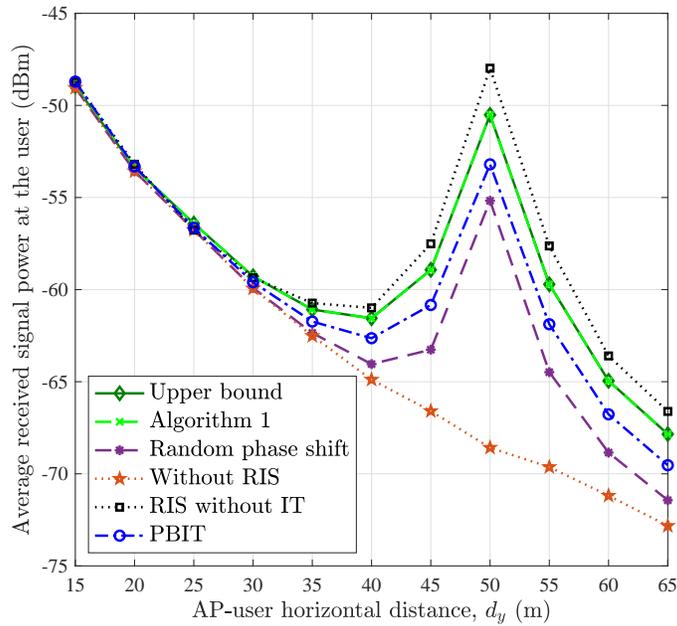}  
	}     
	\subfigure[Estimated CSI at the AP, $P_p=10$ dBm]{ 
		\label{estimated CSI}     
		\includegraphics[width=4in]{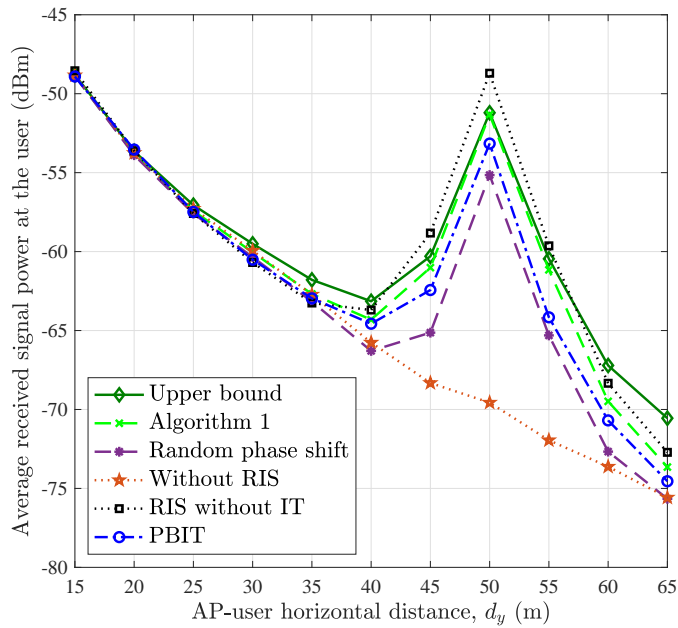}     
	}     
	\revh{\caption{Average received signal power at the user versus AP-user horizontal distance $d_y$, where $G=4$, $P_t=20$ dBm, and ${\bar K}=3$ in Algorithm~\ref{alg2}.}}     
	\label{received pow}     
\end{figure}

To evaluate the effectiveness of Algorithm~\ref{alg2}, we consider the following schemes with fixed $G=4$ and $P_t=20$ dBm: 1) Benchmark scheme without information transfer (IT) \cite{Wu2018Intelligent} where all RIS elements are turned ON and problem (P2) with $\mathbb{I}=\mathbb{G}$ is solved to obtain the active and passive beamforming vectors; 2) Conventional MISO scheme without RIS in which ${\bf w}= {\hat{\bf h}_d}{\big /}{\left\| \hat{\bf h}_d \right\|}$; 3) Random phase shift scheme where the phase shifts of diagonal entries in $\bm \Phi$ are randomly drawn from $(0,2 \pi ]$ and then problem (P1.1) is solved to obtain the active beamforming vector; 4) Upper bound that solves problem (P2); \revh{5) PBIT scheme where elements in the ON/OFF-state vector $\bf s$ are independently drawn from the set $\{0,1\}$ with equal probability. 
For the PBIT scheme, the active and passive beamforming vectors are obtained by solving problem (P1) with ${\bf A}=\frac{1}{4}({\bf 1}_G\cdot{\bf 1}_G^T+{\bf I}_G)$ and ${\bf a}=\frac{1}{2}\cdot {\bf 1}_G$. The number of ON-state groups in Algorithm~\ref{alg2} is set as ${\bar K}=3$.} 
Note that schemes 1) and 2) can be recognized as two special cases of Algorithm~1 with ${\bar K}=G$ and ${\bar K}=0$, respectively. 
%The average received signal power achieved by applying Algorithm~1 with ${\bar K}=3$ is plotted in Fig.~\ref{received pow}. 
%In Fig.~\ref{received pow}, we show the performance of \rev{Algorithm~1} versus the AP-user horizontal distance $d_y$, where $G=4$, $P_t=20$ dBm, and the RIS-RPM  with ${\bar K}=3$ are considered. 
Firstly, it can be observed that when the user locates in the vicinity of the RIS, all the RIS-assisted schemes significantly enhance the average received power at the user as compared to the scheme without RIS. 
%Moreover, the two proposed optimization algorithms achieve almost the same performance and significantly outperform two benchmark schemes with random phase, thus validating effectiveness of our proposed schemes. 
Secondly, as shown in Fig.~\ref{received pow}, Algorithm~\ref{alg2} significantly outperforms the random phase shift scheme. 
Moreover, as expected in Section~\ref{beamforming1}, the average received  power achieved by Algorithm~1 is upper-bounded by the beamforming design based on the instantaneous ON/OFF state information of the RIS.  
%whereas Algorithm \ref{alg2} for Strategy~2 based on statistical ON/OFF state information of the RIS is more favorable from the practical implementation perspective  at the expense of a relatively high complexity. 
On the other hand, it is worth pointing out that Algorithm~\ref{alg2} incurs small received signal power loss as compared to the scheme without IT. 
This is expected since in Algorithm~\ref{alg2}, one RIS-elements group is turned OFF deliberately to convey additional information of the RIS, while all RIS-elements groups are used for enhancing the reflected signal power in the latter scheme. 
The impact of channel estimation error on the performance of active and passive beamforming is also shown in Fig.~\ref{received pow}. 
We can observe that as compared to the upper bound, Algorithm~\ref{alg2} with perfect CSI at the AP achieves nearly the same performance, while with estimated CSI it performs worse when the user is far away from both the AP and RIS.  
This can be explained by the fact that given the same transmit power at the user, received training signal power is lower when the user moves far away from both the AP and RIS, and thus the channel estimation accuracy is reduced. 
%which further reveals that the joint optimization strategy with instantaneous ON/OFF state information of the RIS is the performance upper bound of the proposed RIS-RPM system and serves to compare  Algorithm \ref{alg2}.  
%On the other hand, Algorithm~\ref{alg2} shows a worse performance when the user is far away from both the AP and RIS, e.g., $d_y>55$ m, as compared with Algorithm~\ref{alg1}. 
%Moreover, it can be observed that Algorithm~\ref{alg2} is more sensitive to channel estimation error than Algorithm~\ref{alg1}. 
%One can observe that when the phase shifts of RIS groups are not designed properly, the cases using the RIS with partial reflection have worse performance than the one with full reflection. 
%This further reveals that the proposed joint beamforming designs can fully reap the active and passive beamforming gain by combining coherent signals constructively at the user. 
%Moreover, Fig.~\ref{perfect CSI} shows that with the perfect CSI at the AP, the two proposed algorithms achieve almost the same performance. 
%That is because Algorithm~\ref{alg1} requires only the knowledge of channels associated with specific ON-state RIS groups. 

\subsection{Outage Rate Performance}

\begin{figure}[!t]
	\centering
	\includegraphics[width=4in]{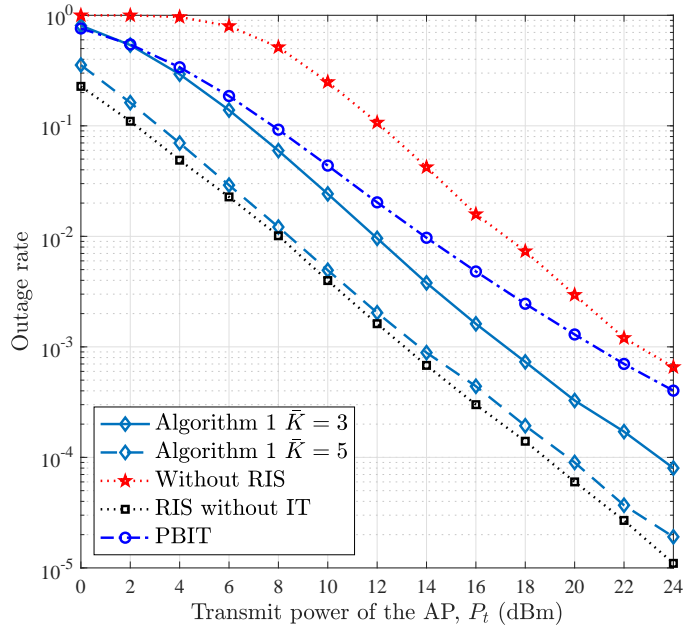}
	\revh{\caption{ Performance comparison between the proposed RIS-RPM scheme and the PBIT counterpart, where $G=6$, $R=1$, and $d_y=45$~m.}}
	\label{outage 1}
\end{figure}

In Fig.~\ref{outage 1}, we consider the PBIT scheme where elements in $\bf s$ are independently drawn from the set $\{0,1\}$ with equal probability. 
Under the PBIT scheme, the average number of ON-state groups at the RIS is $\mathbb{E}\{{\bf s}^T{\bf s}\}=G/2$. 
\revh{The outage rates of our proposed RIS-RPM scheme with ${\bar K}=G-1$ and ${\bar K}=G/2$ are plotted.} 
%For fair comparison, we set $G=2$ for the PBIT scheme to keep the bit rate of additional information the same. 
%For fair comparison with identical average energy, the RIS-RPM scheme with $\bar{K}=G/2$ is considered. 
Two benchmark schemes are considered: 1) Benchmark scheme without IT; 2) Conventional MISO scheme without RIS. 
Fig.~\ref{outage 1} shows the outage rate performance of different schemes versus the AP’s transmit power, where $G=6$, $R=1$ and $d_y=45$ m.  
We can observe that the outage rate performance of the proposed RIS-RPM scheme with ${\bar K}=G/2=3$ outperforms that of the PBIT counterpart. This can be understood by the fact that different from the PBIT counterpart that suffers from large fluctuation in the reflected signal power due to the varying number of ON-state elements, the RIS-RPM scheme keeps the number of ON-state elements at each time constant to reduce such power fluctuation, thus showing better outage rate performance. 
\revh{Moreover, by increasing the number of ON-state groups ${\bar K}$, the outage rate performance of the RIS-RPM scheme can be further improved.} 
%Moreover, the performance gap between these two schemes becomes larger as the AP’s transmit power increases. 
%\rev{On the other hand, this performance improvement comes at the cost of higher channel training overhead. 
%The number of channel coefficients to be estimated is twice that in the PBIT scheme.
%In addition, due to the extreme case where all the RIS elements are turned OFF, the outage rate of the PBIT scheme is even higher than that of the conventional MISO without RIS when the AP's transmit power is high.} 

\subsection{Achievable Rate Performance}

To evaluate the achievable rate performance of the RIS-RPM scheme, the following schemes are considered: 1) Benchmark scheme without IT; 2) Conventional MISO scheme without RIS; 3) Upper bound that computes the achievable rate of (\ref{mi-h case1})\revh{; 4) PBIT scheme where elements in $\bf s$ are independently drawn from the set $\{0,1\}$ with equal probability.} 
%	Note that schemes 1) and 2) can be recognized as two special cases of the RIS-RPM scheme with ${\bar K}=G$ and $0$, respectively. 
In the sequel, the active and passive beamforming vectors are obtained based on the estimated CSI.

\subsubsection{Effect of AP-User Horizontal Distance $d_y$}

\begin{figure}[!t] 
	\centering    
	\subfigure[$P_t=0$ dBm] {
		\label{dist1}     
		\includegraphics[width=4in]{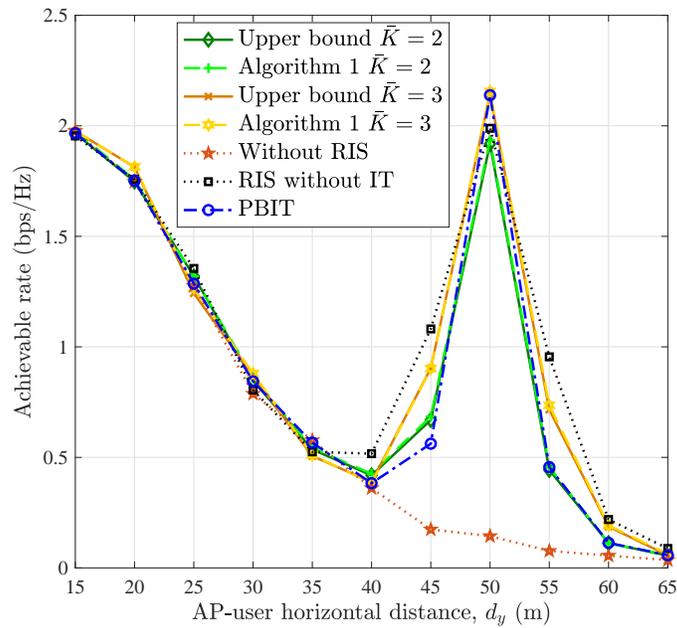}  
	}        
	\subfigure[$P_t=20$ dBm] { 
		\label{dist2}     
		\includegraphics[width=4in]{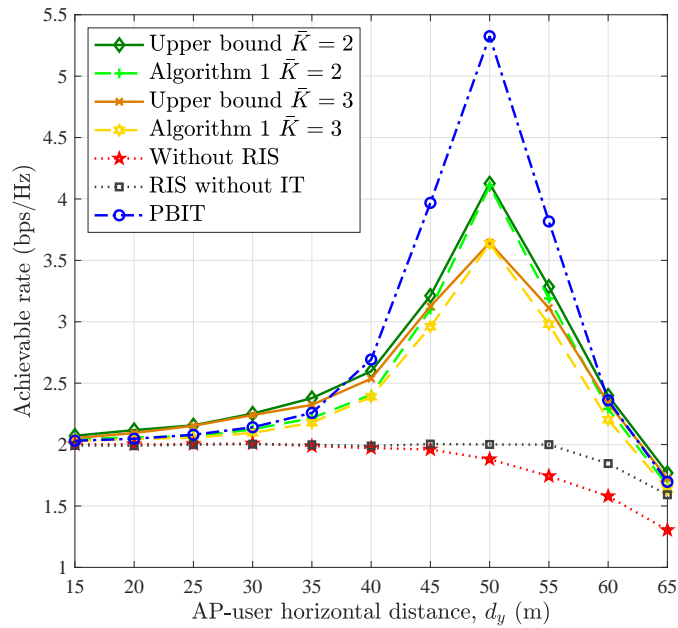}     
	}   
	\revh{\caption{Effect of AP-user horizontal distance $d_y$ on the achievable rate, where $G=4$, and (a) $P_t=0$ dBm; (b) $P_t=20$ dBm.}}     
	\label{dist}     
\end{figure}

In Fig.~\ref{dist}, we evaluate the effect of $d_y$ on the achievable rate, where $G=4$,  ${\bar K}=\{2,3\}$ and $P_t=\{0~\text{dBm}, 20~\text{dBm}\}$. 
%varies from 1 to 3. 
%The achievable rates of the MISO-IRS and the conventional MISO are also presented in Fig.~\ref{dist}, which correspond to two special cases with ${\bar K}=$ 4 and 0, respectively. 
One can observe from Fig.~\ref{dist1} that, when the transmit power of the AP is very low, the achievable rate of the scheme without RIS decreases as the user moves away from the AP, and approaches zero. In contrast, the achievable rates of those schemes assisted by the RIS increase drastically as the user moves toward the RIS, and decrease as the user moves away from both the AP and RIS. 
This is because when the user is close to either the AP or RIS, it is able to receive stronger transmitted/reflected signals from the AP/RIS. This phenomenon implies that the cell-edge user can benefit from an RIS deployed in its neighborhood, i.e., the rate of the cell-edge user can be enhanced by deploying an RIS, instead of improving the AP's transmit power or deploying an expensive AP/relay. 
On the other hand, we observe from Fig.~\ref{dist1} that the RIS-RPM scheme with ${\bar K}=3$ has the potential to outperform the one without IT despite that the received signal power achieved by the RIS-RPM scheme is lower than that achieved by the one without IT as shown in Fig.~\ref{estimated CSI}. 
For the RIS-RPM scheme with ${\bar K}=2$, since a large number of RIS elements are turned OFF deliberately for information transfer, the additional information from the RIS cannot compensate for the information reduction caused by the loss of received signal power, thus leading to a smaller achievable rate than the scheme without IT. 
%does not have a rate gain over the RIS-enhanced MISO without information transfer when the transmit power is very low. 
%This is expected since a large number of RIS elements are turned OFF deliberately for information transfer, i.e., the entropy of the RIS cannot compensate for the reduction of information due to the loss of reflected signal power. 
Moreover, it can be observed from Fig.~\ref{dist2} that when the AP's transmit power is high, the RIS-RPM scheme exhibits significantly superior rate performance over the scheme without IT, which is attributed to the additional information delivered by the RIS. 
This implies that the RIS-RPM scheme provides a mechanism to enable a flexible tradeoff between the received signal power and achievable rate performance by varying the number of OFF-state groups at the RIS.  
\revh{Moreover, when the transmit power of the AP is high, the PBIT scheme achieves the maximum achievable rate, as expected.}

%Moreover, Algorithm~\ref{alg2} for Strategy~2 has a non-trivial performance loss as compared to Algorithm~\ref{alg1} for Strategy~1, which serves as the performance upper bound of RIS-RPM. 
%This can be explained by the fact that when the transmit power increases, not only the signal power is enhanced, but also the channel estimation error is amplified, which damages the performance of Algorithm~\ref{alg2}.
%Therefore, the channel estimation error has a non-trivial impact on the rate performance of Algorithm~\ref{alg2}. 

\subsubsection{Effect of Number of ON-State RIS-elements Groups $\bar{K}$}

\begin{figure}[!t]
	\centering
	\includegraphics[width=4in]{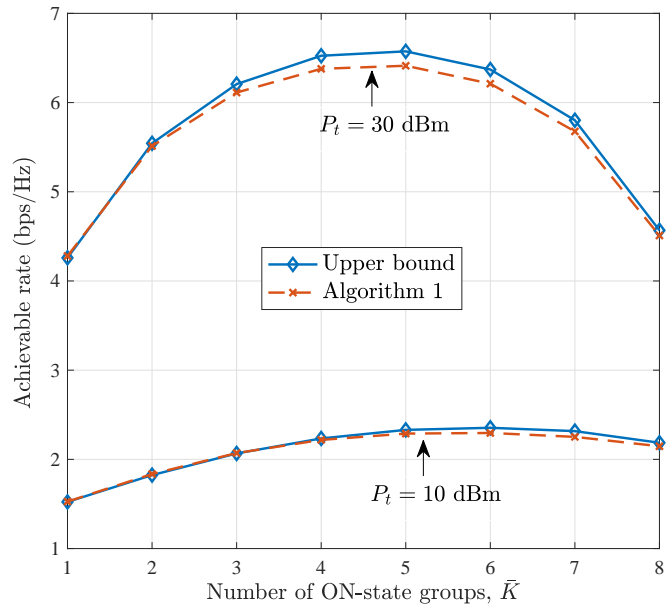}
	\caption{Effect of $\bar{K}$ on the achievable rate, where $G=9$, $d_y=45$ m, and $P_t$ equal to $10$ dBm and $30$ dBm are considered.}
	\label{active}
\end{figure}

We compare the achievable rate of the RIS-RPM scheme versus $\bar{K}$ in Fig.~\ref{active}, where $G=9$, $d_y=45$ m, and $P_t=\{10~\text{dBm}, 30~\text{dBm}\}$ are considered. One can observe that there exists an optimal $\bar{K}$, which varies with different AP's transmit power levels. 
%maximizing the achievable rate. 
For $P_t=30$ dBm, the optimal $\bar{K}$ is 5 whereas the optimal $\bar{K}$ for $P_t=10$ dBm is 6. 
%Algorithms~\ref{alg1} and \ref{alg2} achieve the same solutions. 
Generally, it is expected that when the AP's transmit power is high enough, the optimal $\bar{K}$ is more likely to be $\left\lceil G/2\right\rceil$, since the entropy of the RIS is maximized. 
Furthermore, when the AP's transmit power is very high, due to the assumption of finite-alphabet input, the achievable rate at different $\bar{K}$ values will be the sum of the corresponding uncoded transmitted information rates of the AP and RIS. 
In contrast, when the AP's transmit power becomes very low, the optimal $\bar{K}$ is more likely to be $(G-1)$, since the reflected signal power is maximized while the RIS still can convey its information through RPM. 

\subsubsection{Effect of RIS-elements Grouping Ratio}

%\begin{figure}[!t]
%	\centering
%	\includegraphics[width=3in]{grouping.eps}
%	\caption{Effect of RIS grouping ratio on the achievable rate in the case of $d_y=45$ m.}
%	\label{grouping}
%\end{figure}

%\begin{figure}[!t] 
%	\centering    \hspace{-3mm}   
%	\subfigure[] {
%		\label{grouping1}     
%		\includegraphics[width=2in]{grouping3.eps}  
%	}     
%	\subfigure[] { 
%		\label{grouping2}     
%		\includegraphics[width=2in]{grouping4.eps}     
%	}     
%	\caption{Effect of RIS-elements grouping ratio on the achievable rate, where $d_y=45$ m, and total number of ON-state RIS elements: (a) $\bar{L}\times\bar{K}=72$; (b) $\bar{L}\times\bar{K}=24$.}     
%	\label{grouping}     
%\end{figure}

\begin{figure}[!t]
	\centering
	\includegraphics[width=4in]{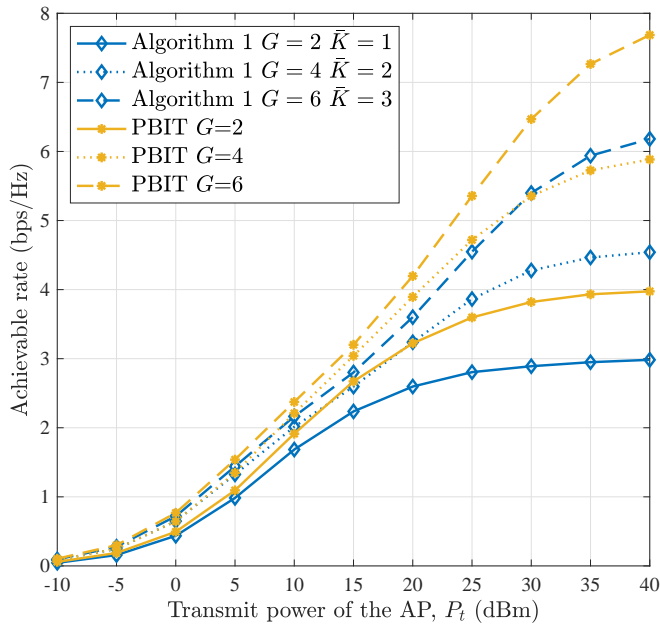}
	\revh{\caption{Effect of RIS-elements grouping ratio on the achievable rate, where $d_y=45$ m and the average number of ON-state RIS elements at each time is $\bar{L}\times\bar{K}=72$.}}
	\label{grouping}
\end{figure}

{\renewcommand{\arraystretch}{1.2}%
	\begin{table}[!t]
		\begin{center}\revh{\caption{Grouping Ratio and Channel Estimation Overhead Ratio}}
			\begin{tabular}{|c|c|c|c|}
				\hline
				$G$ & $2$ & $4$ & $6$  \\
				\hline
				$\rho$ & $1/72$ & $1/36$ & $1/24$ \\
				\hline
				$\xi$ & $3/150$ & $5/150$ & $7/150$ \\
				\hline
				$L_x \times L_z$ & $12 \times 6$ & $6 \times 6$ & $6 \times 4$ \\
				\hline
			\end{tabular}
			\label{grouping ratio}
		\end{center}
	\footnotesize{Each group consists of $\bar{L}=L_x \times L_z$ elements with $L_x$ elements along $x$-axis and $L_z$ elements along $z$-axis.}
\end{table}}

The RIS-elements grouping ratio is defined by $\rho\triangleq 1/\bar{L}$.  
Let $\xi\triangleq (G+1)/T_c$ denote the ratio of time overhead for channel estimation to the coherence time normalized to the symbol sampling period. 
In Fig.~\ref{grouping}, we examine the effect of $\rho$ on the achievable rate, where $d_y=45$~m, $G=\{2,4,6\}$, and $\bar{K}=G/2=\{1,2,3\}$ for the proposed RIS-RPM scheme. 
Note that the average number of ON-state RIS elements keeps constant for different schemes. 
As can be seen, the RIS-RPM scheme with large $G$ can achieve better achievable rate performance. 
This phenomenon can be explained by the fact that with large grouping ratio, not only the degrees of freedom for RIS reflection design increases, achieving high passive beamforming gain, but also more additional information can be conveyed through the index combination of RIS-elements groups, both improving the sum achievable rate. 
However, as shown in Table~\ref{grouping ratio}, the pilot overhead ratio $\xi$ increases with the grouping ratio, which results in more time for channel estimation and less time for data transmission, thus reducing the average achievable rate. 
\revh{On the other hand, we can observe that a large AP's transmit power level is needed for the PBIT scheme to be competitive.} 
%Therefore, by varying the grouping ratio (or equivalently the number of \rev{RIS-elements} groups), a tradeoff can be made between the system performance and channel estimation overhead. 

\section{Conclusions}\label{conclusion}
In this paper, we considered an RIS-enhanced MISO wireless communication system and proposed the RPM scheme for the dual-use of passive beamforming and information transfer of the RIS. 
A practical beamforming design based on the RIS's statistical ON/OFF state information was proposed to maximize the {\it average} received signal power at the user, 
for which an efficient algorithm based on the alternating optimization technique was proposed to obtain a high-quality solution. 
Next, we formulated an optimization problem to maximize the {\it instantaneous} received signal power by designing active and passive beamforming based on the RIS's instantaneous ON/OFF state information, which characterized the upper bound on the received signal power of the RIS-RPM scheme. 
Moreover, the asymptotic outage probability of the RIS-RPM scheme over Rayleigh fading channels was derived in closed-form. 
In particular, the RIS was shown to be able to increase the diversity gain by properly designing the phase shifts of its elements. 
The achievable rate of the RIS-RPM scheme has been analyzed for the case where the transmitted symbol was drawn from a finite constellation. 
Finally, simulation results corroborated the effectiveness of Algorithm~\ref{alg2} as well as the RIS-RPM scheme and revealed the effect of different system parameters on the achievable rate performance of the RIS-RPM scheme. 
It was shown that the RIS-RPM scheme was able to improve the achievable rate performance despite the loss in received signal power as compared to the conventional RIS-assisted system with full-ON reflection.

% Can use something like this to put references on a page
% by themselves when using endfloat and the captionsoff option.
\ifCLASSOPTIONcaptionsoff
  \newpage
\fi

\bibliographystyle{IEEEtran}
% argument is your BibTeX string definitions and bibliography database(s)
\bibliography{IRS_inf}

\end{document}